\definecolor{myurlcolor}{rgb}{0,0,0.4}
\definecolor{mycitecolor}{rgb}{0,0.5,0}
\definecolor{myrefcolor}{rgb}{0.5,0,0}
\newtheorem{theorem}{Theorem}
\newtheorem{proposition}{Proposition}
\newtheorem*{proof*}{Proof}
\newcommand{\be}{\begin{equation}}
\newcommand{\ee}{\end{equation}}
\newcommand{\bea}{\begin{eqnarray}}
\newcommand{\eea}{\end{eqnarray}}
\title{Schwinger's Picture of Quantum Mechanics III: The statistical interpretation}
\author{F. M. Ciaglia$^{1,5}$  \href{https://orcid.org/0000-0002-8987-1181}{\includegraphics[scale=0.7]{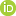}}, A. Ibort$^{2,3,6}$\href{https://orcid.org/0000-0002-0580-5858}{\includegraphics[scale=0.7]{ORCID.png}}, G. Marmo$^{4,7}$\href{https://orcid.org/0000-0003-2662-2193}{\includegraphics[scale=0.7]{ORCID.png}}\\
\footnotesize{$^{1}$\textit{ Max Planck Institute for Mathematics in the Sciences, Leipzig, Germany}} \\
\footnotesize{$^{2}$\textit{ ICMAT, Instituto de Ciencias Matem\'{a}ticas (CSIC-UAM-UC3M-UCM)}}  \\
\footnotesize{$^{3}$\textit{ Depto. de Matem\'aticas, Univ. Carlos III de Madrid, Legan\'es, Madrid, Spain}}  \\
\footnotesize{$^{4}$\textit{ Dipartimento di Fisica ``E. Pancini'', Universit\`a di Napoli Federico II, Napoli, Italy}} \\
\footnotesize{$^{5}$\textit{ e-mail: \texttt{florio.m.ciaglia[at]gmail.com}}}, \\
\footnotesize{ $^{6}$\textit{ e-mail: \texttt{albertoi[at]math.uc3m.es}}} \\ 
\footnotesize{$^{7}$\textit{ e-mail: \texttt{marmo[at]na.infn.it}}}}
\begin{document}

\maketitle

\begin{abstract}
Schwinger's algebra of selective measurements has a natural interpretation in the formalism of groupoids.
Its kinematical foundations, as well as the structure of the algebra of observables of the theory, was presented in \cite{Ib18, Ib18b}.  In this paper, a closer look to the statistical interpretation of the theory is taken and it is found that an interpretation in terms of Sorkin's quantum measure emerges naturally.  It is proven that a suitable class of states of the algebra of virtual transitions of the theory  allows to define quantum measures by means of the  corresponding decoherence functionals.   Quantum measures satisfying a reproducing property are described and a class of states, called factorizable states, possessing the Dirac-Feynman `exponential of the action' form are characterized.   Finally,  Schwinger's transformation functions are interpreted similarly as transition amplitudes defined by suitable states.   The simple examples of the qubit and the double slit experiment are described in detail,  illustrating the main aspects of the theory.
\end{abstract}

\tableofcontents

\section{Introduction: Groupoids and Quantum Mechanics} 

\subsection{On the statistical interpretation of Quantum Mechanics by Feynman and Schwinger}
A careful interpretation of the probabilistic nature of Quantum Mechanics led both J. Schwinger and R.P. Feynman to their own, quite disparate, formulations of the theory.    Already in his Ph. D. Thesis \cite{Fe05} and in the seminal article \cite{Fe48}, Feynman forcefully stated that \textit{``...it seems worthwhile to emphasize the fact that  [the observed experimental facts] are all simply direct consequences of Eq. (\ref{feynman}),  
\begin{equation}\label{feynman}
\varphi_{ab} = \sum_b \varphi_{ab} \varphi_{bc}\, ,
\end{equation}
for it is essentially this equation that is fundamental in my formulation of quantum mechanics''} \cite{Fe48}.  The quantum amplitudes $\varphi_{ab}$ being such that $|\varphi_{ab}|^2$ represent the classical probability that, if measurement $A$ gave the result $a$, then measurement $B$ will give the result $b$.    Then, Feynman, following Dirac's powerful insight \cite{Di33}, proceeded by postulating that this quantum probability amplitude ``has a phase proportional to the action'' and implemented his sum over histories description of quantum mechanics \cite{Fe48} that, as stated by Yourgrau and Mandelstan, \textit{``...cannot fail  but to observe that Feynman's principle -- and this is no hyperbole -- expresses the laws of quantum mechanics in an exemplary neat and elegant manner''} \cite[Footnote 6]{Yo68}.

Alternatively, J. Schwinger introduced the statistical interpretation of his selective measurement symbols by stating: \textit{``...measurements of properties $B$, performed on a system in a state $c'$ that refers to properties incompatible with $B$, will yield a statistical distribution of the possible values.  Hence, only a determinate fraction of the systems emerging from the first stage will be accepted by the second stage.  We express this by the general multiplication law:
$$
M(a',b') M(c',d') = \langle b' \mid c' \rangle M(a',d') \, ,
$$ 
where $ \langle b' \mid c' \rangle$ is a number characterising the \underline{statistical relation}\footnote{The underlying is ours.} bewteen states $b'$ and $c'$''} \cite[Chap. 1.3]{Sc91}.  We will just add here that Schwinger's transformation functions $ \langle b' \mid c' \rangle$ played an instrumental role in his development of quantum electrodynamics \cite{Sc51}.   

Much more recently, R. Sorkin, in his paper presenting a quantum measure interpretation of quantum mechanics \cite{So94}, when discussing the standard statistical interpretation of quantum mechanics, stressed: \textit{``...to the untutored mind, however, the formal rules of the path-integral scheme, could seem unnatural and contrived.  Why are probabilites squares of amplitudes...?''}.  

In this paper, we will try to offer a new approximation to all these ideas, tying together the (apparently) disparate statistical interpretations of quantum mechanics upon which Schwinger and Feynman founded their own descriptions of the theory, and putting them under the unifying conceptual framework provided by Sorkin's quantum measure interpretation of quantum mechanics.  
This will be achieved by using the recently proposed groupoid interpretation of Schwinger's algebra of measurements in \cite{Ib18, Ib18b}.    It will be shown that the obtained results reproduce nicely both Schwinger's and Feynman's interpretations, providing an explicit proof of Feynman's equation, the construction of Schwinger's transitions functions, and Sorkin decoherence functionals from first principles.  We must acknowledge that there are no fundamentally new results but, as Feynman's himself stated in the introduction to his epoch making paper \cite{Fe48}: \textit{``... there is a pleasure in recognizing old things from a new point of view. Also, there are problems for which the new point of view offers a distinct advantage''}.
 
In our previous works \cite{Ib18, Ib18b}, both the kinematical background and the basic dynamical structures for a new description of quantum mechanical systems inspired by Schwinger's algebra of selective measurement were presented.  
It was argued that the basic kinematical structure needed to describe a theory of quantum systems can be developed from the primary notions of \textit{outcomes} or \textit{events}, \textit{transitions} and \textit{transformations} that, from a mathematical viewpoint, satisfy the algebraic properties of a 2-groupoid.   `Outcomes' or `events'\footnote{The name was chosen for the lack of a better word.  `Event' has a precise meaning in probability and causality theories, and it collides with the meaning assigned by Sorkin to it. Schwinger's called them `states', however `state' will be used in a widely extended technical sense that, at the same time, captures perfectly well the required statistical meaning. Thus we will stick with `outcomes'  (or `events') for the time being.} and `transitions' provide a natural abstract setting for Schwinger's notion of physical selective measurements, and form a groupoid from the mathematical perspective.  

The concept of `outcome' extends the notion of `state' used by Schwinger (that in his case coincides with the standard notion of maximal compatible family of measurements): \textit{``... a complete measurement, which is such that the systems chosen possess definite values for the maximum number of attributes...  Thus the optimum state of knowledge concerning a given system is realized by subjecting it to a complete selective measurement}'' \cite[Chap. 1.2]{Sc91}.   We would like to extend such notion to consider possible outcomes of observations or manipulations performed on a given system, not necessarily complete in any sense, so that we may consider histories of outcomes as natural ingredients of the theory (a possibility already anticipated by Feynman: \textit{``....  Suppose a measurement is made which is capable only of determining that the path lies somewhere within a region $R$.  The measurement is to be what we might call an `ideal measurement'... I have not been able to find a precise definition'',} \cite{Fe48}).

On the other hand, in Schwinger's conceptualisation, the notion of `transition' is clearly identified and corresponds to `measurements that change the state' \cite[Chap. 1.3]{Sc91}.   Thus, the notion of transition introduced in \cite{Ib18} extends the notion of selective measurements that change the state to include all physical feasible changes between events of the system.  Transitions  can be naturally composed and their composition law satisfies the axioms of a groupoid which constitutes the fundamental algebraic structure of the theory.  

It is instrumental the assumption that transitions are invertible.   In this sense, we agree with Feynman when states quite forcefully:  \textit{``The fundamental (microscopic) phenomena in nature are symmetrical with respect to interchange of past and future''} \cite[Chap. I, p. 3]{Fe05}. We share this principle, that leads to the assumption that the transitions of the theory must be invertible, hence, define a groupoid and not just a category\footnote{However, Schwinger, even if his formalism implies that the selective measurements that change the state are `invertible', only reluctantly acknowledges that when states:  \textit{``...the arbitrary convention that accompanies the interpretation of the measurement symbols and their products - the order of events is read from right to left (sinistrally), but any measurement symbol equation is equally valid if interpreted in the opposite sense (dextrally), and no physical result should depend upon which convention is employed''} \cite[Chap. 1.7]{Sc91}.}.   

Passing from a `reference system' with outcomes denoted by $a$ and transitions $\alpha$ to another with events  $b$ and transitions $\beta$ requires a theory of transformations. Such theory is developed by Schwinger \cite[Chap. 2.5]{Sc91} and reproduces the standard theory of unitary operators developed by Dirac.  However, previous to that, and at a more basic level, Schwinger introduces the notion of \textit{transformation function} $\langle a' \mid b' \rangle$, a notion that will be discussed in the present paper:  \textit{``...measurement symbols of one type can be expressed as linear combinations of the measurement symbols of another type.  From its role in effecting such connections the totality of numbers $\langle a' \mid b' \rangle$ is called the transformation function relating the $a-$ and the $b-$descriptions''.}    In Schwinger's formulation, the transformation functions arise as a concrete representation of an abstract operation that was coined `transformation' and that affects  specific transitions as explained in \cite{Ib18}.   The theory of `transformations' thus developed fits naturally in the algebraic setting of the theory of groupoids and determines a 2-groupoid structure on top of Schwinger's groupoid of outcomes and transitions.    

The previous ideas fix the kinematical framework of the theory as discussed in \cite{Ib18}, where no attempt to introduce a dynamical content was made.  In this sense we were following R. Sorkin's dictum of \textit{``proposing a framework in which the ontology or `kinematics' and the dynamics or `laws of motion' are as sharply separated from each other as they are in classical physics'' }\cite{So94}.  

The dynamical aspects of the theory were discussed though in \cite{Ib18b}.  The departing point of the analysis was the key idea of considering an observable as the assignment of an amplitude to any transition, that is, an observable is a function on the groupoid of transitions, an idea which just reflects the abstraction of the determination of an observable by means of the amplitudes $\langle a | A | a' \rangle$.    The notion of observables thus introduced leads in a natural way to the construction of the $C^*$-algebra of observables of the system and a Heisenberg-like formulation of dynamics as infinitesimal generators of  their automorphisms. 

Physical states of the system correspond in this way to states of the $C^*$-algebra of observables, that is, normalised positive linear functionals on the algebra, opening the road to the interpretation of the theory in terms of Hilbert spaces and operators by applying the GNS construction associated to any state of the theory, an idea that will be used repeatedly in the present paper (another use of the theory is shown in \cite{Co19} where coherent states are nicely described in this setting).  

We must stress here that this approach departs from Schwinger's derivation of the laws of dynamics from a quantum dynamical principle, that nevertheless will be undertaken in a future publication \cite{Ib18c}.  We consider that the approach taken in \cite{Ib18b} is more natural and we agree with R. Sorkin: \textit{``...quantum theory differs from classical mechanics in its dynamics, which ... is stochastic rather than deterministic. As such the theory functions by furnishing probabilities for sets of histories''} \cite{So94}.   In this sense, the dynamical theory that we propose is closer in spirit to R. Sorkin's understanding of quantum mechanics as a quantum measure theory, point of view that will be one of the main subjects of the present paper.   

Therefore, the present paper will provide a statistical interpretation of the theory by constructing a quantum measure in Sorkin's sense \cite{So94} on its groupoid of transitions.  Such quantum measure will be determined by an invariant state of the algebra of transitions of the theory.   The key idea comes from the realization that a state $\rho$ determines a function $\varphi$ of positive semi-definite type on the given groupoid, and this function actually defines a decoherence functional in a natural way.   The relation between decoherence functionals and quantum measures allows to provide the desired statistical interpretation of the theory by identifying the amplitudes of transitions with the values $\varphi(\alpha)$ of the positive semi-definite function defining the decoherence functional, and its module square with the `probability' of an event.    In developing the theory, the GNS construction will be used to interpret the obtained notions in the more familiar terms of vector-valued measures on Hilbert spaces, and an extension of Naimark's reconstruction theorem for groupoids will be proved.  

The second part of the paper will be devoted to identify two classes of states that have a specific physical meaning.  The first one are those states such that the associated physical amplitudes satisfy Feynman's reproducing property (\ref{feynman}).  We will characterise those states in a purely algebraic way as those whose characteristic functions $\varphi$ are idempotent with respect to the convolution product in the algebra of observables of the theory.    

The second class provides an answer to the singularity of Dirac-Feynman postulate  stating that  amplitudes have the form $e^{\frac{i}{\hbar}S}$ for a quantum theory on space-time.  It will be shown that there is again a purely algebraic notion in the groupoids setting that characterises completely these states and that is called `factorization'.  The proof of the corresponding theorem is worked out in detail in the finite-dimensional case, and it constitutes one of the main results reported here. 

Finally, the third part of the paper is devoted to put Schwinger's theory of transformations functions on the same footing as the previous notions.  This is achieved by observing that there are natural states, those associated to outcomes $a$ of the theory, whose corresponding amplitudes on transitions associated to other outcomes $b$  provide precisely Schwinger's transitions functions $\langle b | a \rangle$, and they are given precisely as inner products of vectors on suitable Hilbert spaces (again obtained by a natural use of the GNS construction).

The rest of this introduction will be devoted to succinctly summarise the basic notions and notations on groupoids and their algebras used throughout the paper (see the preceding papers \cite{Ib18}, \cite{Ib18b} for a detailed account of these ideas).


\subsection{The groupoids description of Schwinger's algebra of measurements: basic notations and definitions}\label{sec:groupoids}

Even if groupoids can be described in a very abstract setting using category theory, in this paper, we will just use simple set-theoretical concepts and notations to work with them.  For the most part, we will assume that groupoids are discrete (countable) or even finite\footnote{We will be concerned mostly with the algebraic structures of the theory leaving many of the deep and delicate analytical details involved in the infinite dimensional setting for further discussion.}.  

Thus, a groupoid $\mathbf{G}$ will be a set  whose elements, denoted by greek letters $\alpha, \beta, \gamma,...$, will be called transitions. There are two maps $s,t \colon \mathbf{G} \to \Omega$, called respectively source and target, from the groupoid $\mathbf{G}$ into a set $\Omega$ whose elements will be denoted by lowercase latin letters $a,b,c,\ldots, x,y,z$ and called outcomes or events.  We will often use the diagrammatic representation $\alpha\colon a \to a'$ for the transition $\alpha$ if $s(\alpha) = a$ and $t(\alpha) = a'$.  Notice that the previous notation doesn't imply that $\alpha$ is a map from a set $a$ into another set $a'$, even if occasionally we will use the notation $\alpha(a)$ to denote $a' = t(\alpha)$. We will  also say that the transitions $\alpha$ \textit{relates} the event $a$ to the event $a'$.   

Denoting by $\mathbf{G}(y,x)$ the set of transitions relating the event $x$ with the event $y$, i.e., $\alpha \in \mathbf{G}(y,x)$ if $\alpha \colon x \to y$, there is a composition law $\circ \colon \mathbf{G}(z,y) \times \mathbf{G}(y,x) \to \mathbf{G}(z,x)$, such that if $\alpha \colon x \to y$ and $\beta \colon y \to z$, then $\beta \circ \alpha \colon x \to z$\footnote{The `backwards' notation for the composition law has been chosen so that the various representations and compositions used along the paper look more natural. It is also in agreement with the standard notation for the composition of functions.}.  Two transitions $\alpha$, $\beta$ such that $t(\alpha) = s(\beta)$ will be said to be composable.  The set of composable transitions form a subset of the Cartesian product $\mathbf{G}\times \mathbf{G}$ sometimes denoted by $\mathbf{G}_2$.   

It is postulated that the composition law $\circ$ is associative whenever the composition of three transitions makes sense, that is: $\gamma \circ (\beta \circ \alpha) = (\gamma \circ \beta) \circ \alpha$, whenever $\alpha \colon v \to x$, $\beta \colon x \to y$ and $\gamma \colon y \to z$.    Any event $a\in \Omega$ has associated a transition denoted by $1_a$ satisfying the properties $\alpha \circ 1_a = \alpha$, $1_{a'}\circ \alpha = \alpha$ for any $\alpha \colon a \to a'$.   Notice that the assignment $a \mapsto 1_a$ defines a natural inclusion $i\colon \Omega \to \mathbf{G}$ of the space of events in the groupoid $\mathbf{G}$.     Finally it will be assumed that any transition $\alpha \colon a \to a'$ has an inverse, that is there exists $\alpha^{-1} \colon a' \to a$ such that $\alpha \circ \alpha^{-1} = 1_{a'}$, and $\alpha^{-1} \circ \alpha = 1_a$.  

Given an event $x \in \Omega$, we will denote by $\mathbf{G}_+(x)$ the set of transitions starting at $a$, that is, $\mathbf{G}_+(x) = \{ \alpha \colon x \to y \} = s^{-1}(x)$.  In the same way, we define $\mathbf{G}_-(y)$ as the set of transitions ending at $y$, that is, $\mathbf{G}_-(y) = \{ \alpha \colon x \to y \} = t^{-1}(y)$.   The intersection of $\mathbf{G}_+(x)$ and $\mathbf{G}_-(x)$ consists of the set of transitions starting and ending at $x$, and is called the isotropy group $G_x$ at $x$: $G_x = \mathbf{G}_+(x) \cap \mathbf{G}_-(x)$.     

Given an event $a \in \Omega$, the orbit $\mathcal{O}_a$ of $a$ is the subset  of all events related to $a$, that is, $a' \in \mathcal{O}_a$ if there exists $\alpha \colon a \to a'$.   The isotropy groups $G_x$ and $G_y$ of two events in the same orbit, $x,y\in \mathcal{O}_a$, are isomorphic.     Clearly, the isotropy group $G_a$ acts on the right on the space of transitions starting from $a$, that is, there is a natural map $\mu_a \colon \mathbf{G}_+(a) \times G_a \to \mathbf{G}_+(a)$, given by $\mu_a(\alpha, \gamma_a) = \alpha \circ \gamma_a$ (notice that the transition $\gamma_a\colon a \to a$ doesn't change the source of $\alpha \colon a \to a'$).  Then, it is easy to check that there is a natural bijection between the space of orbits of $G_a$ in $\mathbf{G}_+(a)$ and the elements in the orbit $\mathcal{O}_a$ given by $\alpha \circ G_a \mapsto \alpha(a) = a'$. Then, we may write:
$$
\mathbf{G}_+(a) / G_a \cong \mathcal{O}_a \, .
$$
It is obvious that there is also a natural left action of $G_a$ into $\mathbf{G}_-(a)$ and that $G_a\backslash \mathbf{G}_-(a) \cong \mathcal{O}_a$ too.  We will say that the groupoid is connected or transitive if it has is a single orbit, $\Omega = \mathcal{O}_a$, for some $a$.  Then, it can be proved that $\Omega = \mathcal{O}_x$ for any $x \in \Omega$.   Any groupoid decomposes as the disjoint union of connected groupoids, any of them being the restriction of the given groupoid to any one of its orbits.   In what follows, we will always assume that groupoids are connected.

If the groupoid $\mathbf{G}$ is finite, the groupoid algebra, or algebra of virtual transitions, of the groupoid $\mathbf{G}$ is defined in the standard way as the associative algebra $\mathbb{C}[\mathbf{G}]$ generated by the elements of $\mathbf{G}$ with relations provided by the composition law of the groupoid.
That is, elements $\mathbf{a}$ in   $\mathbb{C}[\mathbf{G}]$ are finite formal linear combinations $\mathbf{a}= \sum_{\alpha \in \mathbf{G}} a_\alpha \, \alpha$, with $a_\alpha$ complex numbers. The groupoid algebra elements $\mathbf{a}$ can be thought of as generalized or mixed transitions for the system and will be called also virtual transitions.  The associative  composition law on $\mathbb{C}[\mathbf{G}]$ is defined as:
$$
\mathbf{a}\cdot \mathbf{a}' = \sum_{\alpha, \alpha' \in \mathbf{G}} a_\alpha a_{\alpha'} \, \delta_{\alpha, \alpha'} \, \, \alpha \circ \alpha'  =   \sum_{\alpha, \alpha' \in \mathbf{G}_2} a_\alpha a_{\alpha'} \, \, \alpha \circ \alpha'  \, , 
$$
where the indicator function $\delta_{\alpha, \alpha'}$ takes the value 1 if $\alpha$ and $\alpha'$ are composable, and zero otherwise.    The groupoid algebra has a natural antilinear involution operator denoted $\ast$ and defined as $\mathbf{a}^\ast = \sum_\alpha \bar{a}_\alpha\,  \alpha^{-1}$, for any $\mathbf{a}= \sum_{\alpha} a_\alpha \, \alpha$.    

If the groupoid $\mathbf{G}$ is finite, there is a natural unit element $\mathbf{1} = \sum_{a\in \Omega} 1_a$ in the algebra $\mathbb{C}[\mathbf{G}]$ (see \cite{Ib18d, Ib19} for an elementary introduction to the theory of groupoids and their representations).

Another family of relevant mixed transitions is given by the transition $\mathbf{1}_{G_a} = \sum_{\gamma_a\in G_a} \gamma_a$ which are the characteristic `functions' of the isotropy groups $G_a$, and by the transition $\mathbf{1}_{\mathbf{G}_\pm(a)} = \sum_{\alpha\in \mathbf{G}_\pm(a)} \alpha$ which represent the characteristic `functions' of the sprays $\mathbf{G}_\pm(a)$ at $a$.  Finally, we should mention the `incidence' or total transition, also called the `incidence matrix' of the groupoid, defined as $\mathbb{I} = \sum_\alpha \alpha$.



\section{Quantum measures and Schwinger's algebra}\label{sec:measure}


\subsection{Quantum measures and decoherence functionals}

Sorkin's introduction of the notion of a quantum measure allows for a statistical interpretation of Quantum Mechanics without recurring to some of the difficulties related to the existence of observers to assess the predictive capacity of the theory or the collapse of the state of the system \cite{So94}.  

According to Sorkin's theory, Quantum Mechanics can be understood as a generalized measure theory on the space $\mathcal{S}$ of all possible histories of some physical system.   It assigns a non-negative real number  $\mu(A)$, the quantum measure of $A$, to every measurable subset $A$ of the set of histories of the system.  The quantum measure $\mu$ is not an ordinary probability measure because in general the interference term:
\begin{equation}\label{I2}
I_2(A,B) = \mu (A \sqcup B) - \mu(A) - \mu(B) \, ,
\end{equation}
for two disjoint\footnote{In what follows we will use the notation $A \sqcup B$ to indicate the union of disjoint sets.} sets $A$, $B$, doesn't vanish. 
Thus, under this perspective, the feature that distinguishes a quantum theory from a classical one is interference.   This means that the quantum measure $\mu$ will enjoy different formal properties than a standard probability measure.  

We start by defining a family of set-functions, describing interference terms, for any generalized measure theory over a sample space $\mathcal{S}$ equipped with a $\sigma$-algebra $\Sigma$ of measurable sets:
\begin{eqnarray}
I_1(A) &=& \mu (A) \, , \nonumber \\ 
I_2(A,B) &=& \mu (A \sqcup B ) - \mu(A) - \mu(B) \, , \nonumber \\ 
I_3(A,B,C) &=& \mu (A \sqcup B \sqcup C ) - \mu(A \sqcup B) -  \mu(A \sqcup C) - \mu(B\sqcup C) \nonumber \\  
&&+ \mu (A ) + \mu(B) + \mu(C) \, , \label{I3}
\end{eqnarray}
and so on, where $A,B,C$ are disjoint sets of $\mathcal{S}$.   Higher order interference relations beyond 
bipartite and tripartite interference terms, as given by Eqs. (\ref{I2}), (\ref{I3}), can be defined as:
\begin{eqnarray*}
I_n (A_1, \ldots, A_n) &=& \mu (A_1 \sqcup \cdots \sqcup A_n ) - \sum_{i_1 < i_2 < \cdots < i_{n-1}}\mu(A_{i_1} \sqcup \cdots \sqcup A_{i_{n-1}}) \\ &&+ \sum_{i_1 < i_2 < \cdots <i_{n-2}}\mu(A_{i_1} \sqcup \cdots \sqcup A_{i_{n-2}}) + \cdots + (-1)^{n} \sum_{i = 1}^n \mu(A_i) \, ,
\end{eqnarray*}
for any family of disjoint sets $A_i$.
It can be shown that the interference relation $I_n$ of order $n$ implies $I_r$ for all $r \geq n$.  Actually it is easy to prove by induction that:
\begin{eqnarray}
I_{n+1} (A_0 , A_1, \ldots, A_n) &=& I_n(A_0\sqcup A_1, A_2, \ldots, A_n) \nonumber \\ &&-  I_n(A_0, A_2, \ldots, A_n) -  I_n( A_1, A_2, \ldots, A_n) \, ,  \label{recursive}
\end{eqnarray}
hence, if $I_n = 0$ on any family of disjoint measurable sets $A_i$, then $I_{n+1}$ will vanish too.

The interference functions $I_n$ allow us to distinguish between different types of theories according to their statistical properties.  According to Sorkin, a theory is of grade-$k$ if it satisfies $I_{k+1} = 0$. Thus a classical measure theory is a grade-1 measure theory, which is equivalent to saying that there is no bipartite interference, that is, $\mu (A \sqcup B ) = \mu (A) + \mu (B)$, and Kolmogorov's standard probability interpretation of the measure $\mu(A)$ can be used.   

A quantum measure theory is a grade-2 measure theory, that is, a quantum measure is a set function $\mu \colon \Sigma \to \mathbb{R}^+$ such that it satisfies the grade-2 additivity condition\footnote{Technically speaking, this definition will correspond to a pre-quantum, or finite, quantum measure, because it is necessary to add a continuity condition to make it consistent with $\Sigma$ being a $\sigma$-algebra: $\lim \mu (A_i) = \mu (\bigcap A_i)$ for all decreasing sequences, and $\lim\mu(A_i) = \mu (\bigcup A_i)$ for all increasing sequences, see \cite{Gu09} for more details.} $I_3 = 0$.   Thus a quantum measure allows to describe a theory with non-trivial second order (but no higher order) vanishing interference terms, i.e., $I_k = 0$ for  $k \geq 3$.

However, the statistical interpretation of such condition still needs to be assessed because, as it is easily exemplified by the double slit experiment\footnote{It has been shown recently though that the tripartite interference condition $I_3 = 0$ holds in quantum mechanics by using a 3-slit experiment \cite{Si09, Ne17}.}, the classical probabilistic interpretation of the measure of a set as a frequency of outcomes of a random variable cannot be held anymore.      The quantum measure of an event in Sorkin's sense is not simply the sum of the probabilities of the histories that compose it, but is given (in an extension of Born's rule) by the sum of the squares of certain sums of the complex amplitudes of the histories which comprise the event\footnote{Without entering such discussion here, Sorkin has proposed an interpretation of the number $\mu (A)$, assigned to an event $A$ by the quantum measure $\mu$, in terms of the notion of `preclusion' instead of the of notion of `expectation' \cite{So95}.  Preclusion is related to the impossibility of null sets and, in this context, it is necessary to add a regularity condition to the notion of quantum measure, that is, $\mu (A) = 0$ implies that $\mu(A \cup B) = \mu (B)$, and $\mu(A) = 0$ implies that $\mu (B) = 0$ for all $B \in \Sigma$, $B \subset A$ - in such case the quantum measure is called completely regular.}.    

More important to our interest in this paper is to understand the construction of quantum measures in the abstract background provided by the groupoid interpretation of the fundamental algebraic properties of quantum systems.     For this purpose, we will discuss first the relation of quantum measures and decoherence functionals in the realm of groupoids.  In doing so, we will extend some recent results on the representation of decoherence functionals that will be helpful in providing a new statistical interpretation of Schwinger's transformation functions.    

It should be pointed out that the recursive relation in Eq. (\ref{recursive}), when  applied to the additivity condition $I_3 = 0$, implies that $I_2$ is additive for disjoint sets $A,B,C$. More specifically, we get:
$$
I_2(A \sqcup B, C) = I_2(A , C) + I_2(B, C) \, ,
$$
In fact, if $I_2$ were additive on the first factor for all $C$ (not just for sets $C$ disjoint with $A$ and $B$), then spanning the quantity $I_2(A \sqcup B, A \sqcup B)$ we will get $\mu (A) = \frac{1}{2} I_2(A,A)$.  Then, in this case, the quantum measure $\mu$ could be recovered as a quadratic function on the algebra of measurable sets.  This suggests to consider  biadditive set functions $D \colon \Sigma \times \Sigma \to \mathbb{C}$ as a natural way of constructing quantum measures.  Actually, this idea is deeply rooted in the histories approach to quantum mechanics under the name of decoherence functionals \cite{Ge90} and, what is more important for the arguments to follow, a significant class of normalised quantum measures can be built by using decoherence functionals $D$.

Thus, a general decoherence functional on a measurable space $(\mathcal{S},\Sigma)$ is a set function $D \colon \Sigma \times \Sigma \to \mathbb{C}$ such that it is Hermitean:
\begin{equation}\label{hermitian}
D(A, B) = \overline{D(B, A)} \, , \qquad \forall A,B \in \Sigma \, ,
\end{equation}
non-negative:
\begin{equation}\label{positivity}
D(A,A) \geq 0 \, ,
\end{equation}
and addtive:
\begin{equation}\label{finite_additivity}
D(A\sqcup B, C) = D(A, C) + D(B, C) \, , \quad \forall A,B,C \in \Sigma \, , \quad A\cap B = \emptyset \, .
\end{equation}

It will be assumed that the decoherence functional\footnote{The notion of decoherence functional is known under the name of bimeasures in abstract measure theory and has been discussed thoroughly in multiple contexts, see for instance \cite{Ib14} for the description of the moment problem for polymeasures and references therein.} $D$ is normalised, that is, $D(\mathcal{S},\mathcal{S}) = 1$, and, consequently, this notion is sufficient to construct a quantum measure by means of:
\begin{equation}\label{muD}
\mu (A) = D(A, A) \, .
\end{equation}

However, in order to obtain a continuous completely regular quantum measure (just a quantum measure for short in what follows) it is necessary to introduce a slightly more restrictive definition of decoherence functional \cite{Do10}.  A strongly positive normalised decoherence functional $D$ is a complex-valued set function defined on the Cartesian product $\Sigma \times \Sigma$ satisfying the following properties:

\begin{enumerate}

\item[i.-] Normalization: 
\begin{equation}\label{decoherence_normalization}
D(\mathcal{S},\mathcal{S}) = 1 \, .
\end{equation}

\item[ii.-] $\sigma$-Additivity:  $D(\cdot , A)$ is a complex measure for any $A \in \Sigma$.

\item[iii.-] Positivity:  Given any natural number $n$ and any family $A_1, \ldots, A_n$ of measurable sets  in $\Sigma$, then $D(A_i,A_j)$ is a positive semi-definite $n\times n$-matrix.

\end{enumerate}

Condition (i) is an irrelevant normalisation condition.  Notice that condition (iii) implies conditions (\ref{hermitian}) and (\ref{positivity}) above, while condition (ii) implies the finite-additivity condition (\ref{finite_additivity}).   Then, it is a routine check to show  that the set function $\mu$ defined by Eq. (\ref{muD}) is a completely regular quantum measure (see for instance \cite{Gu09}).


\subsection{Quantum measures on groupoids}
 
As it turns out, the groupoid formalism to describe quantum systems provides a natural framework to construct decoherence functionals, hence, to build quantum measures, and thus it provides a statistical interpretation of the theory.   

We will consider that the connected discrete\footnote{As customary in this series of papers we will assume that the groupoid $\mathbf{G}$ is discrete countable (or even finite) to avoid the technical complications brought by functional analysis, even though most of the theory can be extended naturally to continuous or Lie groupoids with ease as it will be shown elsewhere.} groupoid $\mathbf{G} \rightrightarrows\Omega$ provides a description of our quantum system.  


\subsubsection{Decoherence functionals and positive semidefinite functions on groupoids}

Because of its $\sigma$-additivity, a decoherence functional $D$ on the discrete groupoid $\mathbf{G}$ is determined by their values on singletons\footnote{The $\sigma$-algebra of measurable sets is just the power set of $\mathbf{G}$, that is $\Sigma = \mathcal{P}(\mathbf{G})$.}, that is, 
$$
D(A,B) = \sum_{\alpha\in A, \beta\in B} D(\{\alpha \}, \{\beta \}) \, ,
$$ 
then, we may consider a decoherence functional on discrete groupoids as defined by a bivariate function $\Phi \colon \mathbf{G} \times \mathbf{G} \to \mathbb{C}$, $ \Phi (\alpha, \beta ) = D(\{\alpha \}, \{\beta \})$,  satisfying:
\begin{enumerate}
\item \begin{equation}\label{normalization_bivariate} 
\sum_{\alpha, \beta \in \mathbf{G}} \Phi (\alpha,\beta) = 1 \, ;
\end{equation}

\item Given any natural number $n$ and any family $\alpha_1, \ldots, \alpha_n$ of transitions in $\mathbf{G}$, then $\Phi(\alpha_i,\alpha_j)$ is a positive semi-definite $n\times n$-matrix.
\end{enumerate}
The first property (1) is the immediate consequence of condition (i) in the definition of decoherence functionals, Eq. (\ref{decoherence_normalization}), and the second condition is equivalent to condition (iii) (notice that the sum of positive semi-definite matrices are positive semi-definite).  
Consistently, we will also say that a bivariate function $\Phi$ satisfying condition (2) above is positive semi-definite.   

Let us introduce another notion which is  relevant for the purposes of this paper.  A function $\varphi \colon \mathbf{G} \to \mathbb{C}$ will be said to be positive semi-definite if for any $n \in \mathbb{N}$, $\xi_i \in \mathbb{C}$, $\alpha_i \in \mathbf{G}$, $i = 1, \ldots, n$, the following inequality is satisfied:
$$
\sum_{i,j= 1}^n \bar{\xi}_i \xi_j \, \varphi (\alpha_i^{-1}\circ \alpha_j ) \geq 0 \, ,
$$
where the sum is taken over all pairs $\alpha_i, \alpha_j$ such that the composition $\alpha_i^{-1}\circ \alpha_j$ makes sense, that is, $t(\alpha_j) = t(\alpha_i)$.   If we want to emphasise that the sum is restricted to those pairs $\alpha_i$ and $\alpha_j$ such that  $\alpha_i^{-1}$ and $\alpha_j$ are composable we will also write:
$$
\sum_{i,j= 1}^n \bar{\xi}_i \xi_j \, \varphi (\alpha_i^{-1}\circ \alpha_j ) \, \delta (t(\alpha_i), t(\alpha_j)) \geq 0 \, ,
$$
where the delta function $\delta (t(\alpha_i), t(\alpha_j))$ implements the composability condition above.  Note that $\varphi (1_x)$ must be a non-negative real number and that, if $\lambda \geq 0$, then $\lambda \varphi$ is positive semi-definite for any positive semi-definite function $\varphi$. We will say that $\varphi$ is normalized if $\sum_{x\in \Omega} \varphi (1_x) = 1$, and we will always assume  this to be the case in the following.

Clearly, any positive semi-definite function $\varphi$ on $\mathbf{G}$ defines a bivariate positive semi-definite function $\Phi$ by means of:
\begin{equation}\label{phiPhi}
\Phi (\alpha, \beta ) = \delta (t(\alpha), t(\beta)) \, \varphi( \alpha^{-1}\circ \beta ) \, , \qquad \alpha, \beta \in \mathbf{G} \, .
\end{equation}

Among the decoherence functionals $D$ on groupoids, the invariant ones play a distinguished role.  A decoherence functional $D$ on the groupoid $\mathbf{G}$ is said to be (left-) invariant if $D(\alpha\circ A, \alpha\circ B) = D(A, B)$ for all subsets $A,B$.  

In terms of the corresponding bivariate positive semidefinite function $\Phi$, a decoherence functional is invariant iff $\Phi$ is (left-) invariant with respect to the natural action of the groupoid $\mathbf{G}$ on the product groupoid $\mathbf{G}\times \mathbf{G}$, that is\footnote{A similar definition can be used for right-invariant decoherence functionals.},: 
\begin{equation}
\Phi (\alpha\circ \beta, \alpha\circ \beta') = \Phi(\beta, \beta') \, ,
\end{equation}
for all triples $\alpha, \beta, \beta'$ such that the compositions $\alpha\circ \beta$ and $\alpha\circ \beta'$ make sense.
Then, it is clear that there is a one-to-one correspondence between invariant strongly positive decoherence functionals $D$ and positive semidefinite functions $\varphi$ on the groupoid $\mathbf{G}$, namely, the correspondence given by the assignment $\varphi \mapsto \Phi$ given in Eq. (\ref{phiPhi}).  

Notice that the converse of (\ref{phiPhi}) is given by: $\Phi \mapsto \varphi$, with $\varphi(\alpha) = \Phi(1_y,\alpha ) = \Phi (\alpha^{-1}, 1_x)$, if $\alpha \colon x \to y$.  

The previous discussion shows that we may study invariant decoherence functionals on discrete groupoids, that is, invariant quantum measures on them, by studying the corresponding positive semi-definite functions $\varphi$.  On the other hand, a natural way to study decoherence functionals (and almost any abstract object in mathematics) is by looking at their representations (see for instance \cite{Gu12} where recent results on this direction are shown).    The relevant observation here is that positive semi-definite functions on groupoids provide a natural way to construct representations of groupoids and simultaneously of decoherence functionals.   Such theory extends naturally that of positive semi-definite functions on groups with an analogue of Naimark's reconstruction theorem for groups that provides a natural representation for the decoherence functional associated to the function $\varphi$.   We will devote the following paragraphs to develop the theory in the case of discrete groupoids we are working with.


\subsubsection{States and positive semidefinite functions on groupoids}\label{sec:states_decoherence}

Given the groupoid $\mathbf{G}\rightrightarrows \Omega$, its associated $C^*$-algebra $C^*(\mathbf{G})$ provides the background for the amplitudes and for the algebra of observables of the theory \cite{Ib18b}.   

There are various ways of constructing a $C^*$-algebra associated to the groupoid $\mathbf{G}\rightrightarrows \Omega$.  In the finite case, it can be identified with the algebra $\mathcal{F}(\mathbb{G})$ of functions on $\mathbf{G}$, or with the algebra $\mathbb{C}[\mathbf{G}]$ of finite linear combinations of elements in $\mathbf{G}$, recall Sect. \ref{sec:groupoids}.   In the countable discrete case, we may consider the von Neuman algebra generated by the family of operators on $L^2(\mathbf{G})$ defined by the regular representation, but, in our context, and  provided that the fundamental representation $\pi_0$ of the groupoid is faithful, we will simplify the discussion by considering the closure of $\mathbb{C}[\mathbf{G}]$ with respect the norm induced from its fundamental representation (see for instance the construction of the $C^*$-algebra of the groupoid $\mathbf{A}_\infty$ in \cite{Ib18b}).   In other words, consider the Hilbert space $L^2(\Omega)$.
If $\Omega$ is countable, $L^2(\Omega)$ is just the complex separable Hilbert space generated by $\Omega$ with the inner product defined by declaring that the elements $x$ of $\Omega$ form an orthonormal basis $\{|x\rangle \}$.   The fundamental representation $\pi_0 \colon \mathbb{C}[\mathbf{G}] \to \mathcal{B} (L^2(\Omega))$ is given by:
\begin{equation}\label{fundamental}
(\pi_0(\mathbf{a}) \psi) (x) = \sum_{\alpha \in \mathbf{G}_+(x)} a_\alpha \psi (t(\alpha)) \, .
\end{equation}
Notice that $\pi_0(\mathbf{a}^*) = \pi_0(\mathbf{a})^\dagger$.   Then, we may consider the von Neumann algebra generated by the operators $\pi_0(\mathbf{a})$, that is, $C^*(\mathbf{G}) = (\pi_0(\mathbb{C}[\mathbf{G}] ))'' \subset  \mathcal{B} (L^2(\Omega))$.   It is also clear that if the fundamental representation is faithful\footnote{Which requires that $\Omega$ is large enough, for instance the fundamental representation of a group is not faithful as $\Omega$ consists of just one element.}, then $\mathbf{G}$ is mapped injectively in $ \mathcal{B} (L^2(\Omega))$ and, as is easily checked, the algebra  $ C^*(\mathbf{G}) $, that in what follows  will be denoted also  as $\mathcal{A}_{\mathbf{G}}$, is unital with unit the identity operator $\mathbf{1} = I$.

Given a unital $C^*$-algebra, a state $\rho$ on it is a normalised positive linear functional.
In the previous situation, a state will be a linear map $\rho \colon \mathcal{A}_{\mathbf{G}} \to \mathbb{C}$ such that $\rho (\mathbf{1}) = 1$ and $\rho (\mathbf{a}^* \cdot \mathbf{a}) \geq 0$ for all $\mathbf{a}$.   States play a particularly relevant role in the study of $C^*$-algebras.  The space of states form a convex domain in the dual space of the algebra denoted as $\mathcal{S}(\mathcal{A}_{\mathbf{G}})$ (or just $\mathcal{S}$ for short) and it is well-known that the structure of the algebra can be recovered from them.    

In the discussion to follow, states are going to play an instrumental role because of the GNS construction and of the following observation:  there is a one-to-one correspondence between states and continuous positive semi-definite functions $\varphi$ on $\mathbf{G}$.  The correspondence is as follows.  Let $\varphi \colon \mathbf{G} \to \mathbb{C}$ be a normalized positive semi-definite function, then, we define the linear map $\rho_{\varphi} \colon \mathcal{A}_\mathbf{G} \to \mathbb{C}$ as (we consider for simplicity that $\Omega$ is finite\footnote{In the continuous or infinite case, it will be assumed that $\Omega$ carries a probability measure $\nu$, the one used to define $L^2 (\Omega, \nu)$ and $|\Omega| = 1$.}):
$$
\rho_\varphi (\mathbf{a}) = \sum_\alpha a_\alpha\, \varphi (\alpha ) \, .
$$
In the finite case, $\mathbf{1} = \sum_{x\in \Omega} 1_x$, and, clearly, $\rho_\varphi (\mathbf{1}) = 1$.  Moreover, a simple computation shows that:
\begin{equation}\label{rhophi}
\rho_\varphi (\mathbf{a}^* \cdot \mathbf{a}) = \sum_{(\alpha^{-1}, \beta) \in \mathbf{G}_2} \bar{a}_\alpha a_\beta  \, \varphi(\alpha^{-1}\circ \beta) \geq 0 \, ,
\end{equation}
by the very definition of $\varphi$.  Conversely, given a state $\rho$ on $\mathcal{A}_\mathbf{G}$, we define the function $\varphi$ on $\mathbf{G}$ by restriction of $\rho$, that is, we set:
$$
\varphi_\rho (\alpha ) = \rho (\alpha ) \, ,
$$
and, clearly, $\varphi_\rho$ is normalized positive semi-definite because Eq. (\ref{rhophi}) can be read backwards.  In this case, we will say that $\varphi_\rho$ is the characteristic function of the state $\rho$.

We conclude this section by realising that states on the algebra of generalised transitions of the system are associated with positive semidefinite functions on the groupoid, hence, they determine invariant decoherence functionals, and, consequently,  invariant quantum measures on $\mathbf{G}$.
Therefore, in particular, in the case of finite groupoids, there is a one-to-one correspondence between states and invariant quantum measures on the groupoid.   Notice that in this case, if $A \subset \mathbf{G}$, we get:
\begin{eqnarray}
\mu_\rho (A) &=& D(A,A) = \sum_{\alpha,\beta \in A} \Phi(\alpha, \beta) = \sum_{\alpha,\beta \in A} \delta(t(\alpha), t(\beta) )\, \varphi(\alpha^{-1} \circ \beta) \nonumber \\ &=& \sum_{\alpha,\beta \in A}  \delta(t(\alpha), t(\beta) ) \, \rho (\alpha^* \cdot \beta) \label{quantum_rho} \, .
\end{eqnarray}
The remarkable formula (\ref{quantum_rho}) embodies, in the abstract groupoid formalism ,Sorkin's quantum measure expression for systems described on spaces of histories\footnote{It is also remarkable that the delta function can be dropped in the last expression from (\ref{quantum_rho}) because if $\alpha^{-1}$ and $\beta$ are not composable, then $\alpha^* \cdot \beta = 0$.} (see for instance \cite[eq. 14]{So16}) and explains the quadratic dependence of quantum measures on physical transitions.   

Notice that in the context developed in this section, the evaluation of the state $\rho$ on a transition $\alpha$ can be thought as the complex amplitude of the physical transition defined by $\alpha$, thus, the previous formula encodes the rule that `probabilities' are obtained by module square of amplitudes in the abstract setting of groupoids.   The previous expression for the quantum measure (and the decoherence functional) is given in abstract terms and we would like to describe them in terms of a concrete realization of the theory on a Hilbert space.  This will be the subject of the following sections.



\section{Representations of decoherence functionals and quantum measures}\label{sec:quantum_measure}


\subsection{Representations of groupoids and algebras}

The background needed to construct representations of decoherence functionals on Hilbert spaces in the groupoid formalism of quantum mechanics will be provided by the representations of the groupoid $\mathbf{G} \rightrightarrows \Omega$ itself.  
Even if a functorial definition of representations of groupoids could be used (see the recent presentation of the basic theory \cite{Ib19}), in the setting described in the previous sections, it is simpler to define a representation of the groupoid $\mathbf{G}$ as a representation of the $C^*$-algebra $\mathcal{A}_\mathbf{G}$ on the $C^*$-algebra $\mathcal{B}(\mathcal{H})$ of bounded operators on a complex separable Hilbert space $\mathcal{H}$, that is, we consider a $C^*$-algebra homomorphism $\pi \colon \mathcal{A}_\mathbf{G} \to \mathcal{B}(\mathcal{H})$ which is continuous in the sense that for any $\psi \in \mathcal{H}$, the map $\mathbf{a} \to || \pi(\mathbf{a})\psi ||$ is continuous.  Notice that $\pi (\mathbf{1}) = I$ and $\pi(\mathbf{a}^*) = \pi (\mathbf{a})^\dagger$.  In particular, the fundamental representation $\pi_0$ discussed before, Eq. (\ref{fundamental}), is an example of an irreducible representation of the groupoid $\mathbf{G}$.   

The theory of representations of groupoids shares many aspects with the theory of representations of groups (at least in the finite case, this relation is well developed, see, for instance, \cite{Ib19}), but we will not pretend to start such general discussion here.   In what follows, we will just depart from a given state to construct explicit representations of the groupoid by means of the so called GNS construction.  

Before describing this idea, we would like to point out that if $\pi$ is a nondegenerate representation\footnote{That is, a representation such that $\overline{\mathrm{span} \{ \pi(\mathbf{a})\psi \mid \mathbf{a} \in \mathcal{A}_\mathbf{G}, \psi \in \mathcal{H}\}}  = \mathcal{H}$.} of the groupoid algebra $\mathcal{A}_\mathbf{G}$ on the Hilbert space $\mathcal{H}$, and $\psi$ is a cyclic vector for such representation\footnote{That is, the family of vectors $\{\pi(\mathbf{a}) \psi\}_{\mathbf{a}\in\mathcal{A}_{\mathbf{G}}}$ span $\mathcal{H}$.}, then, we may define the positive semi-definite function:
\begin{equation}\label{character}
\varphi_{\pi,\psi} (\alpha) = \langle \psi , \pi(\alpha) \psi \rangle \, ,
\end{equation}
associated to the representation $\pi$ and the cyclic vector $\psi$.

Notice that (\ref{character}) actually defines a positive semi-definite function on $\mathbf{G}$ as it is shown by the following simple computation (as usual, the sums are taken over all composable pairs $\alpha_i^{-1}$, $\alpha_j$): 
\begin{eqnarray*}
\sum_{i,j=1}^n \bar{\xi}_i \xi_j \varphi_{\pi,\psi} (\alpha_i^{-1}\circ \alpha_j) &=& \sum_{i,j=1}^n \bar{\xi}_i \xi_j \langle \psi ,\pi(\alpha_i^{-1}\circ \alpha_j) \psi \rangle  =  \sum_{i,j=1}^n \bar{\xi}_i \xi_j \langle \psi ,\pi(\alpha_i)^\dagger\pi( \alpha_j) \psi \rangle  \\ &=&  \sum_{i,j=1}^n \bar{\xi}_i \xi_j \langle \pi(\alpha_i) \psi ,\pi( \alpha_j) \psi \rangle \leq \langle  \sum_{i=1}^n \xi_i \pi(\alpha_i) \psi , \sum_{j=1}^n \xi_j \pi( \alpha_j) \psi \rangle  \\ &=&  || \sum_{j=1}^n \xi_j \pi( \alpha_j) \psi ||^2 \geq 0 \, .
\end{eqnarray*}
Then, if $\psi$ is normalized, the state defined by $\varphi_{\pi,\psi}$ determines a quantum measure $\mu_{\pi,\psi}$ given by:
\begin{eqnarray}
\mu_{\pi,\psi} (A) &=& D_{\pi,\psi}(A,A) = \frac{1}{Z_0 }\sum_{\alpha,\beta \in A} \delta(t(\alpha), t(\beta) )\varphi_{\pi,\psi}(\alpha^{-1} \circ \beta) \nonumber \\ &=& \frac{1}{Z_0 } \sum_{\alpha,\beta \in A}  \delta(t(\alpha), t(\beta) ) \, \langle \pi(\alpha) \psi ,\pi( \beta) \psi \rangle  \label{quantum_pi_psi} \, ,
\end{eqnarray}
where $Z_0$ is an appropriate normalization factor determined by the normalization condition (1) in (\ref{decoherence_normalization}).
In other words, we may define (up to a normalization constant) a vector-valued measure $\nu_\pi \colon \Sigma \to \mathcal{H}$ given by:
$$
\nu_\pi (A) = \sum_{\alpha \in A} \pi (\alpha) \psi \, ,
$$
that represents the decoherence functional $D_{\pi,\psi}$ associated to the quantum measure $\mu_{\pi,\psi}$ (see \cite{Do10b}, \cite{Gu12} for an account of the general theory).  Notice, finally, that the cyclic vector $\psi$ for the representation $\pi$ defines a state $\rho_{\pi,\psi} (\mathbf{a}) = \sum_\alpha a_\alpha \langle \psi, \pi (\alpha) \psi \rangle$ whose associated quantum measure is exactly the one defined in Eq. (\ref{quantum_pi_psi}).

It should also pointed out that the characteristic function $\varphi_{\pi, \psi}$ can  be expressed as:
$$
\varphi_{\pi, \psi} (\alpha) = \mathrm{Tr\,} (\hat{\rho}_\psi \pi (\alpha)) \, , 
$$
where $\hat{\rho}_\psi$ denotes the rank-one orthogonal projector $|\psi\rangle \langle \psi| $ on $\mathcal{H}$ onto the one-dimensional space spanned by the vector $|\psi\rangle$.   Then, if we consider instead the trivial projector defined by the identity operator $I$, we will get:
$$
\varphi_{\pi, I} (\alpha) = \mathrm{Tr\,} (\pi (\alpha)) = \chi (\alpha) \, , 
$$
where the function $\chi = \varphi_{\pi, I} $ is commonly known as the character of the representation $\pi$.  It is because of this instance that we would like to call the positive semi-definite function $\varphi_{\pi, \psi}$  the \textit{smeared} character of the representation $\pi$ with respect to the state $\psi$. 


\subsection{The GNS construction.  Representations associated to states}\label{sec:GNS}

Because a quantum measure  $\mu$, or for that matter, a decoherence functional, is associated to a state  $\rho$ on the algebra of the system, it is just natural to use $\rho$ to build a specific representation of the algebra itself.   The GNS construction is the well-known procedure to build a representation of the algebra given a state on it, and we will succinctly review it in the present context.

Consider a state $\rho$ on $\mathcal{A}_{\mathbf{G}}$.   There is a canonical Hilbert space $\mathcal{H}_\rho$ associated to it defined as the completion of the quotient linear space $\mathcal{A}_{\mathbf{G}}/\mathcal{J}_\rho$, where $\mathcal{J}_\rho = \{ \mathbf{a} \mid \rho (\mathbf{a}^\ast \cdot \mathbf{a} ) = 0\}$ denotes the Gelfand ideal of $\rho$, with respect to the norm $||\cdot ||_\rho$ associated to the state $\rho$ and defined by:
$$
||\, [\mathbf{a}]\, ||_\rho = \rho (\mathbf{a}^\ast \cdot \mathbf{a} ) \, ,  
$$
where $[\mathbf{a}] =  \mathbf{a} + \mathcal{J}_\rho$ is in $\mathcal{A}_{\mathbf{G}}/\mathcal{J}_\rho $.
Thus, the Hilbert space $\mathcal{H}_\rho = \overline{\mathcal{A}_{\mathbf{G}}/\mathcal{J}}_\rho^{||\cdot ||_\rho}$ will be called the GNS Hilbert space associated to the state $\rho$\footnote{Such Hilbert space has been recognized in a closely related context by Dowker and Sorkin on its histories interpretation of quantum measures under the name of the `history Hilbert space' \cite{Do10}.}.  The parallelogram identity implies that the inner product $\langle \cdot, \cdot \rangle_\rho$ on $\mathcal{H}_\rho$ is given by:
\begin{equation}\label{inner_rho}
\langle [\mathbf{a}] , [\mathbf{b}] \rangle_\rho = \rho (\mathbf{a}^\ast \cdot \mathbf{b} ) \, .
\end{equation}

For our purposes it is fundamental to observe that there is a natural representation $\pi_\rho$ of the $C^*$-algebra $\mathcal{A}_\mathbf{G}$ on $\mathcal{H}_\rho$ defined by:
$$
\pi_\rho (\mathbf{a}) ([\mathbf{b}]) = [\mathbf{a}\cdot \mathbf{b}] \, , 
$$
for all $\mathbf{a}\in \mathcal{A}_\mathbf{G}$ and $[\mathbf{b}] \in \mathcal{H}_\rho$.  Clearly, the unit $\mathbf{1}$ of the algebra $\mathcal{A}_\mathbf{G}$ is mapped into the identity operator $I$ and $\pi_\rho(\mathbf{a}^\ast ) = \pi_\rho (\mathbf{a})^\dagger $.   

The representation $\pi_\rho$ is non-degenerate and the unit element $\mathbf{1}$ provides a cyclic vector for it.  Denoting, as customary, by $|0\rangle$ the vector $[\mathbf{1}] \in \mathcal{H}_\rho$, it is clear that, the subspace of vectors of the type $\pi_\rho(\mathbf{a}) |0\rangle$ with $\mathbf{a}\in\mathcal{A}_{\mathbf{G}}$, is dense in  $\mathcal{H}_\rho$.  The vector $|0\rangle$ is called (context depending) the \textit{ground}, \textit{vacuum} or \textit{fundamental} vector of the GNS Hilbert space $\mathcal{H}_\varphi$, and we have  $\langle 0 \mid 0 \rangle = \rho (\mathbf{1}^\ast \cdot \mathbf{1}) = 1$.


\subsection{Representation of decoherence functionals}

We shall consider now the state $\rho$ associated to a given invariant decoherence functional $D$.  In other words, according to the discussion in Sect. \ref{sec:states_decoherence}, we may consider a continuous positive semi-definite function $\varphi$ on the groupoid $\mathbf{G}$ and the state $\rho_\varphi$ (and the corresponding decoherence functional) associated to it (recall the fundamental equation relating all these notions, Eq. (\ref{quantum_rho})).  Denoting the GNS Hilbert space associated to the state $\rho_\varphi$ by $\mathcal{H}_\varphi$, we get that $\mathcal{H}_\varphi$ is the completion of $\mathcal{A}_\mathbf{G}/\mathcal{J}_\varphi$, where $\mathcal{J}_\varphi$ denotes now the Gelfand's ideal:
$$
\mathcal{J}_\varphi = \{ \mathbf{a} \mid \sum_{t(\alpha) = t(\beta)} \bar{a}_\alpha a_\beta \,  \varphi (\alpha^{-1}\circ \beta) = 0 \} \, ,
$$
with respect to the norm:
$$
|| \,[\mathbf{a}]\, ||^2_\varphi =   \sum_{t(\alpha) = t(\beta)} \bar{a}_\alpha a_\beta \,  \varphi (\alpha^{-1}\circ \beta) \, ,
$$
that defines the inner product in $\mathcal{H}_\varphi$:
\begin{equation}\label{inner_phi}
\langle [\mathbf{a}] , [\mathbf{b}] \rangle_\varphi = \rho_\varphi (\mathbf{a}^\ast \cdot \mathbf{b} ) =  \sum_{t(\alpha) = t(\beta)} \bar{a}_\alpha b_\beta \, \varphi (\alpha^{-1}\circ \beta) \, ,
\end{equation}
with $\mathbf{a} = \sum_\alpha a_\alpha \, \alpha$,  $\mathbf{b} = \sum_{\beta} b_\beta \, \beta$.
The GNS representation $\pi_\varphi$ defined by the state $\rho_\varphi$ and the fundamental vector $|0\rangle$ allows us to write the amplitude $\varphi (\mathbf{a})$ in the suggestive way:
\begin{equation}\label{amplitude_phi}
\varphi (\mathbf{a}) = \rho_\varphi (\mathbf{a}) = \rho_\varphi (\mathbf{1}^\ast \cdot \mathbf{a}) = \langle 0 \mid [\mathbf{a}] \rangle_\varphi = \langle 0 \mid \pi_\varphi(\mathbf{a}) \mid 0\rangle_\varphi  \, ,
\end{equation}
where we have used (\ref{inner_phi}) and the canonical representation $\pi_\varphi (\mathbf{a}) |0\rangle = [\mathbf{a}]$.

In the same spirit as Eq. (\ref{amplitude_phi}), the canonical representation of the algebra of transitions $\mathcal{A}_\mathbf{G}$ provided by the positive semi-definite function $\varphi$, allows to provide a representation of the decoherence functional in terms of amplitudes in the Hilbert space $\mathcal{H}_\varphi$ (and it constitutes also the particular instance of Eq. (\ref{quantum_pi_psi})) given by:
$$
D_\varphi(\alpha, \beta) = \varphi(\alpha^{-1}\cdot \beta) = \langle 0 \mid \pi_\varphi(\alpha)^\dagger \pi_\varphi(\beta) \mid 0 \rangle_\varphi  \, .
$$
Notice that, if $t(\alpha) \neq t(\beta)$, then $\alpha^{-1}$ and $\beta$ are not composable and $\alpha^{-1}\cdot \beta = 0$.
Hence, $\langle [\alpha] \mid [\beta] \rangle_\varphi = 0$, or, equivalently, $D_\varphi (\alpha, \beta) = 0$.
In this case, we will also say,  mimicking the histories based approach to quantum mechanics, that the two transitions are decoherent. 

Finally, notice that, on singletons, the quantum measure $\mu_\varphi$ determined by the state $\rho_\varphi$ has the definite expression:
$$
\mu_\varphi (\{\alpha \}) = D_\varphi (\alpha, \alpha) = || \pi_\varphi(\alpha) |0\rangle ||_\varphi^2 \, ,
$$
and this expression presents $\mu_\varphi$ as the module square of an amplitude.
However, the non-additivity of the quantum measure implies that, for subsets that are not singletons, the computation of $\mu_\varphi$ has to be performed according to the superposition rule provided by Eq. (\ref{quantum_pi_psi}).


\subsection{Naimark's reconstruction theorem for groupoids}

The discussion in the previous section can be summarised in the form of a theorem:

\begin{theorem}\label{reconstruction} Let $\mathbf{G}\rightrightarrows \Omega$ be a discrete groupoid with finite space $\Omega$.
Then, for any positive semi-definite function $\varphi$ on $\mathbf{G}$, there exists a Hilbert space $\mathcal{H}$, a unitary representation $\pi$ of the groupoid $\mathbf{G}$ on $\mathcal{H}$, and a vector $|0\rangle$ such that:
$$
\varphi (\alpha ) = \langle 0 | \pi (\alpha) |0 \rangle \, .
$$
In other words, any positive semi-definite function $\varphi$ on a groupoid is the smeared character of a representation of the groupoid.
\end{theorem}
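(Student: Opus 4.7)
The plan is to harvest the GNS construction already developed in Sect. \ref{sec:GNS} and Sect. 3.3, since essentially every piece has been assembled; the theorem is really a repackaging of Eq. (\ref{amplitude_phi}). I would begin by invoking the one-to-one correspondence, established in Sect. \ref{sec:states_decoherence}, between normalized positive semi-definite functions $\varphi$ on $\mathbf{G}$ and states $\rho_\varphi$ on the $C^*$-algebra $\mathcal{A}_\mathbf{G}$, given by $\rho_\varphi(\mathbf{a})=\sum_\alpha a_\alpha\,\varphi(\alpha)$. Positivity of $\rho_\varphi$ on squares, Eq. (\ref{rhophi}), is precisely the positive semi-definiteness of $\varphi$, and normalization uses that $\Omega$ is finite so that $\mathbf{1}=\sum_{x\in\Omega} 1_x$ exists in $\mathcal{A}_\mathbf{G}$.

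Next I would apply the GNS construction to $\rho_\varphi$: form the Gelfand ideal $\mathcal{J}_\varphi$, pass to the quotient $\mathcal{A}_\mathbf{G}/\mathcal{J}_\varphi$, and complete it with respect to the inner product $\langle[\mathbf{a}],[\mathbf{b}]\rangle_\varphi=\rho_\varphi(\mathbf{a}^\ast\cdot\mathbf{b})$ to obtain $\mathcal{H}:=\mathcal{H}_\varphi$. Define the representation $\pi:=\pi_\varphi$ by $\pi_\varphi(\mathbf{a})[\mathbf{b}]=[\mathbf{a}\cdot\mathbf{b}]$ and take the cyclic vector $|0\rangle:=[\mathbf{1}]$. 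I would then verify the reconstruction formula directly: since $\mathbf{1}^\ast=\mathbf{1}$ and $\mathbf{1}\cdot\alpha=\alpha$ for every transition $\alpha\in\mathbf{G}$ (the only surviving term in $\sum_x 1_x\cdot\alpha$ is $1_{t(\alpha)}\cdot\alpha$),
\begin{equation*}
\langle 0|\pi_\varphi(\alpha)|0\rangle_\varphi=\langle[\mathbf{1}],[\alpha]\rangle_\varphi=\rho_\varphi(\mathbf{1}^\ast\cdot\alpha)=\rho_\varphi(\alpha)=\varphi(\alpha),
\end{equation*}
which is exactly Eq. (\ref{amplitude_phi}) specialized to an element of $\mathbf{G}$.

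The one point that deserves a line of its own, and which I view as the main conceptual (rather than computational) obstacle, is checking that $\pi_\varphi|_{\mathbf{G}}$ deserves to be called a \emph{unitary} representation of the groupoid. Here one must remember that for a groupoid the correct notion is not global unitarity but the partial-isometry identity $\pi_\varphi(\alpha)^\dagger=\pi_\varphi(\alpha^{-1})$ together with $\pi_\varphi(\alpha^{-1})\pi_\varphi(\alpha)=\pi_\varphi(1_{s(\alpha)})$ and $\pi_\varphi(\alpha)\pi_\varphi(\alpha^{-1})=\pi_\varphi(1_{t(\alpha)})$, the two right-hand sides being orthogonal projections. Both identities follow immediately from the general relation $\pi_\varphi(\mathbf{a}^\ast)=\pi_\varphi(\mathbf{a})^\dagger$ of the GNS representation applied to the involution $\alpha^\ast=\alpha^{-1}$ in $\mathbb{C}[\mathbf{G}]$, and from the composition rules of the groupoid algebra. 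With this observation the three data $(\mathcal{H},\pi,|0\rangle)$ produced above have exactly the properties required by the statement, and the last sentence of the theorem — that $\varphi$ is the smeared character $\varphi_{\pi,\psi}$ of Eq. (\ref{character}) with $\psi=|0\rangle$ — is just a renaming of the formula just verified.
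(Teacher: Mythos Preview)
Your proposal is correct and is exactly the approach the paper takes: the paper presents the theorem explicitly as a summary of the GNS construction carried out in Sect.~\ref{sec:GNS} and the subsequent representation of the decoherence functional, with Eq.~(\ref{amplitude_phi}) as the key identity. If anything, your write-up is more detailed than the paper's, which simply points back to the preceding discussion; your extra paragraph on why $\pi_\varphi|_{\mathbf{G}}$ is a unitary (partial-isometry) representation is a welcome clarification that the paper leaves implicit in the relation $\pi_\rho(\mathbf{a}^\ast)=\pi_\rho(\mathbf{a})^\dagger$.
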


This statement can be considered as the extension of Naimark's reconstruction theorem for groupoids (admittedly, the particular instance of discrete groupoids with finite space of events).   The `reconstruction' character of the previous theorem is justified from the following considerations.    

Let $\pi$ be a unitary representation of the groupoid $\mathbf{G}$  on the Hilbert space $\mathcal{H}$ (by that we mean that $\pi$ defines a $C^*$ representation of the $C^*$-algebra of the groupoid $\mathbf{G}$ on the $C^*$-algebra of bounded operators on the Hilbert space $\mathcal{H}$).   Consider now a state $\rho$ of the $C^*$-algebra $\mathcal{B}(\mathcal{H})$. Because of Gleason's theorem such state can be identified with a normalised Hermitean nonnegative operator $\hat{\rho}$. Then we define the function:
\begin{equation}\label{H_pi_rho}
\varphi_\rho (\alpha) = \mathrm{Tr \,}(\hat{\rho}\,  \pi (\alpha)) \, .
\end{equation}
It is immediate to check that $\varphi_\rho$ defines a positive semidefinite function on $\mathbf{G}$.  Then Thm. \ref{reconstruction} shows that there exists a Hilbert space $\mathcal{H}'$, a representation $\pi'$ and a state $\rho'  = | 0\rangle \langle 0 |$, such that:
\begin{equation}\label{Hprime}
\varphi_\rho (\alpha ) = \mathrm{Tr\,} (\rho' \, \pi'(\alpha)) \, .
\end{equation}
However,  we must point out that, in principle, both representations of the function $\varphi$ provided by Eqs. (\ref{H_pi_rho}) and (\ref{Hprime}) are not equivalent.  In the particular instance of groups, there is a positive answer to the previous question when the representation $\pi$ is irreducible.  In the more general situation of groupoids, these issues will be properly discussed elsewhere.



\section{Factorizing states and decoherence functionals}

The general discussion of Sect. \ref{sec:quantum_measure} has provided a general framework for a statistical interpretation of a groupoids based quantum theory by the hand of quantum measures and their realization by means of states on the algebra of amplitudes of the theory.
However, no specific properties of the states have been identified that will reflect relevant physical properties of the system.   

In this section, we will discuss first the class of states (or quantum measures) the elements of which satisfy Feynman's composition of amplitudes law (\ref{feynman}), and we will identify a particular family of states, that will be called factorizing states, strongly suggesting a Lagrangian based sum-over-histories interpretation of the corresponding quantum measure.  We will close in this way the loop started by Dirac's insight on the role played by the Lagrangian in quantum mechanics and the answers provided by Feynman and Schwinger to that question as discussed in the introduction.  

In the remaining of this section, as stated already before and in order to simplify the presentation, we will restrict ourselves to the case of finite groupoids (even if the formalism extends naturally to countable discrete or even continuous groupoids).


\subsection{Reproducing states}\label{sec:reproducing}

States are just normalised positive linear functionals on the $C^*$-algebra of the groupoid, hence, they are blind to the specific details of the algebraic structure of the algebra (they just preserve the positive cone of the algebra).   It is true though that the $C^*$-algebra structure can be recovered from the space of states, more precisely, because of Kadison's theorem \cite{Ka51}, the real part of a $C^*$-algebra is isometrically isomorphic to the space of all $w^*$-continuous affine functions on its state space, and then, as it was shown by Falceto \textit{et al}, the $C^*$-algebra can be constructed on the space of affine function on the state space iff such space has the structure of a Lie-Jordan-Banach algebra \cite{Fa13} (see also \cite{AlfShu,AlfShu2}).    

Thus, in general, the amplitudes $\varphi(\alpha)$ associated to a given state (or quantum measure) do not satisfy any additional property related to the structure of the algebra.
In particular, they do not satisfy the reproducing property characteristic of Feynman's sum-over-histories interpretation of quantum mechanics discussed in the previous section.  However, it is not hard to characterise a class of states such that the reproducing formula  given by Eq. (\ref{feynman}), that can also be called the abstract Chapman-Kolmogorov equation, holds.

The reproducing condition states will be characterized in terms of the corresponding positive semi-definite function $\varphi$ associated to them.   Because $\varphi \colon \mathbf{G} \to \mathbb{C}$ is a function defined on the groupoid, it is convenient to describe first the structure of the algebra $\mathcal{F}(\mathbf{G})$ of functions on the groupoid.     In the case of finite groupoids, such algebra can be identified with the algebra of amplitudes (see \cite{Ib18b}).     In any case, the associative product $\star$ in $\mathcal{F}(\mathbf{G})$,  called the convolution product, is the natural one induced from the groupoid composition law and is defined by the standard formula:
$$
(f\star g)(\gamma) = \sum_{\tiny{\begin{array}{c}(\alpha, \beta) \in \mathbf{G}_2 \\ \alpha \circ \beta = \gamma \end{array}}} f(\alpha) g (\beta) \, , \qquad f,g \in \mathcal{F}(\mathbf{G}) \, , \gamma \in \mathbf{G} \, .
$$

As in the case of the algebra $\mathcal{A}_\mathbf{G}$, if the space of events $\Omega$ is finite, there is a natural unit element, denoted again by $\mathbf{1}$, and defined as $\sum_{x\in \Omega} \delta_x$, with $\delta_x$ the function that takes the value 1 at $1_x$ and zero otherwise.   

In addition to the associative structure, there is also an
antiunitary involution operator $(\cdot)^\ast$ given by $f^\ast(\alpha) = \overline{f(\alpha^{-1})}$.    The $\ast$-algebra $\mathcal{F}(\mathbf{G})$, like the algebra $\mathbb{C}[\mathbf{G}]$, has a natural representations on the space of square integrable functions on $\Omega$ and $\mathbf{G}$ itself, denoted with the symbols $\pi_0$, $\pi_R$ and $\pi_L$, and called, respectively, the fundamental, right and left regular representations.   The regular representation allows to define the von Neumann algebra of the groupoid as the weak closure of the range $\pi_R(\mathbf{G})$ in the algebra $\mathcal{B}(L^2(\mathbf{G}))$, provided that a suitable measure has been chosen in the groupoid\footnote{Contrary to the situation with groups, even if $\mathbf{G}$ is locally compact there is not a canonical (right/left) `invariant' measure on the groupoid, but a family of Haar measures had to be chosen, see \cite{Re80, La98} for details.}.

We will say that a positive semi-definite function $\varphi$ has the reproducing property if it satisfies 
\begin{equation}\label{reproducing}
\varphi \star \varphi = \varphi \, ,
\end{equation}
or, in other words, $\varphi$ is an idempotent element in $\mathcal{F}(\mathbf{G})$.   
Finally, given a positive semidefinite function $\varphi$, and given two events $a,b \in \Omega$, will define the transition amplitude $\varphi_{ba}$ as the sum of the amplitudes $\varphi(\alpha)$ for all transitions $\alpha \colon a \to b$:
\begin{equation}
\varphi_{ba} = \sum_{\alpha \colon a \to b} \varphi (\alpha) \, .
\end{equation}
 In other words, we may think of $\varphi_{ba}$ as the amplitude assigned to obtaining the outcome $b$ after having obtained the outcome $a$ by the quantum measure $\mu_\varphi$ associated to the positive semi-definite function $\varphi$, or, equivalently, to the state $\rho_\varphi$ determined by $\varphi$.  
 
The previous definition relates the discussion on the statistical interpretation as quantum measures determined by states on the algebra of the groupoid describing a quantum system with Feynman's phenomenological lodestone discussed in the introduction, that is, Eq. (\ref{feynman}). 
 Then, it is easy to show that:

\begin{proposition}\label{prop:FKC} Let $\varphi$ be an idempotent positive semi-definite function on the (finite) groupoid $\mathbf{G}$, that is, it satisfies the reproducing
property condition Eq. (\ref{reproducing}).
Then, the transition amplitudes $\varphi_{a'a}$ associated to it satisfy Feynman's composition law (that may be called the abstract Chapman-Kolmogorov reproducing equation):
$$
\varphi_{a'a} = \sum_{a''\in \Omega} \varphi_{a'a''} \varphi_{a''a} \, , \qquad \forall a,a' \in \Omega \, .
$$
\end{proposition}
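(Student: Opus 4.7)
The plan is to unfold the reproducing property $\varphi\star\varphi=\varphi$ on a single transition $\gamma\colon a\to a'$ and then sum over all such $\gamma$ to recover the transition amplitudes $\varphi_{a'a}$ on both sides.

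First, I would fix events $a,a'\in\Omega$ and an arbitrary transition $\gamma\colon a\to a'$ and write out the reproducing property pointwise using the convolution formula:
$$
\varphi(\gamma)=(\varphi\star\varphi)(\gamma)=\sum_{\substack{(\alpha,\beta)\in\mathbf{G}_2\\ \alpha\circ\beta=\gamma}}\varphi(\alpha)\,\varphi(\beta).
$$
Since composition reads sinistrally in this paper (i.e.\ if $\alpha\circ\beta=\gamma$ with $\gamma\colon a\to a'$, then there is a unique intermediate event $a''=t(\beta)=s(\alpha)$ with $\beta\colon a\to a''$ and $\alpha\colon a''\to a'$), the sum on the right can be reindexed by first choosing $a''\in\Omega$, then independently choosing $\beta\in\mathbf{G}(a'',a)$ and $\alpha\in\mathbf{G}(a',a'')$. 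This gives
$$
\varphi(\gamma)=\sum_{a''\in\Omega}\sum_{\substack{\beta\colon a\to a''\\ \alpha\colon a''\to a'\\ \alpha\circ\beta=\gamma}}\varphi(\alpha)\,\varphi(\beta).
$$

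Second, I would sum both sides over all $\gamma\colon a\to a'$. The left-hand side becomes $\varphi_{a'a}$ by definition. On the right-hand side, as $\gamma$ varies over $\mathbf{G}(a',a)$, the constraint $\alpha\circ\beta=\gamma$ is exactly matched: every pair $(\alpha,\beta)$ with $\alpha\colon a''\to a'$ and $\beta\colon a\to a''$ arises from the unique $\gamma=\alpha\circ\beta\in\mathbf{G}(a',a)$ and, conversely, each $\gamma$ decomposes uniquely once $a''$ is fixed and $(\alpha,\beta)$ is chosen. Hence the constraint drops and the double sum factorizes:
$$
\varphi_{a'a}=\sum_{a''\in\Omega}\Bigl(\sum_{\alpha\colon a''\to a'}\varphi(\alpha)\Bigr)\Bigl(\sum_{\beta\colon a\to a''}\varphi(\beta)\Bigr)=\sum_{a''\in\Omega}\varphi_{a'a''}\,\varphi_{a''a},
$$
which is the desired Chapman--Kolmogorov identity.

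The argument is essentially a direct computation, so I do not expect a substantive obstacle; the only point requiring care is the bookkeeping of composition order, namely checking that the convention $\beta\circ\alpha$ placing $\alpha$ first (as fixed in \Cref{sec:groupoids}) produces the indices $\varphi_{a'a''}\varphi_{a''a}$ in the order written in the statement, rather than the reversed order. Finiteness of $\mathbf{G}$ (and of $\Omega$) is used only to guarantee that all sums converge absolutely and may be freely interchanged; positive semi-definiteness of $\varphi$ plays no role here beyond ensuring that the objects involved are the ones associated to a genuine state.
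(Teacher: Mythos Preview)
Your proof is correct and follows essentially the same route as the paper: both expand $\varphi(\gamma)=(\varphi\star\varphi)(\gamma)$ via the convolution formula, reindex the factorizations $\alpha\circ\beta=\gamma$ by the intermediate event $a''$, sum over all $\gamma\colon a\to a'$, and observe that the resulting double sum factorizes into $\varphi_{a'a''}\varphi_{a''a}$. The only cosmetic difference is that the paper begins by summing over $\gamma$ and then expands, whereas you expand first on a fixed $\gamma$ and then sum; the content is identical.
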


\begin{proof} Clearly, we get:
\begin{equation}\label{comp1}
\varphi_{a'a} = \sum_{\gamma \in \mathbf{G}(a,a')} \varphi (\gamma) =\sum_{\gamma \in \mathbf{G}(a,a')} (\varphi \star \varphi)(\gamma) = \sum_{\gamma \in \mathbf{G}(a,a')} \sum_{\tiny{\begin{array}{c}(\alpha, \beta) \in \mathbf{G}_2 \\ \alpha \circ \beta = \gamma \end{array}}} \varphi(\alpha) \varphi (\beta) \, ,
\end{equation}
where we have used the reproducing property (\ref{reproducing}) for $\varphi$, and the definition of the convolution product.
But now, if $\gamma = \alpha \circ \beta$, because $\gamma \colon a \to a'$, then $\beta \colon a \to a''$ and $\alpha \colon a'' \to a'$ for some $a'' \in \Omega$. Then, the last term in the r.h.s. of Eq. (\ref{comp1}) can be written as:
$$
\sum_{\gamma \in \mathbf{G}(a,a')} \sum_{\tiny{\begin{array}{c}(\alpha, \beta) \in \mathbf{G}_2 \\ \alpha \circ \beta = \gamma \end{array}}} \varphi(\alpha) \varphi (\beta)  = \sum_{a'' \in \Omega}\sum_{\alpha \in \mathbf{G}(a'',a')}  \sum_{\beta \in \mathbf{G}(a,a'')}\varphi(\alpha) \varphi (\beta) = \sum_{a'' \in \Omega} \varphi_{a'a''} \varphi_{a''a} \, .
$$
\end{proof}

We should point out that the transition amplitude $\varphi_{a'a}$ can also be expressed as the transition amplitude associated to the representation of the function $\varphi$ in the space $\mathcal{H}_\Omega$ provided by the fundamental representation $\pi_0$, that is (see \cite{Ib18b}):
$$
\varphi_{a'a} = \langle a' | \pi_0(\varphi) | a \rangle =  \sum_{\gamma \in \mathbf{G}(a,a')} \varphi (\gamma) \, .
$$
Therefore, a simple alternative proof of Prop. \ref{prop:FKC} is obtained by the following computation:
\begin{eqnarray*}
\langle a' | \pi_0(\varphi) | a \rangle &=& \langle a' | \pi_0(\varphi \star \varphi) | a \rangle =  \langle a' | \pi_0(\varphi ) \pi_0(\varphi) | a \rangle \\ &=&  \langle a' | \pi_0(\varphi ) I \pi_0(\varphi) | a \rangle  =  \langle a' | \pi_0(\varphi ) \pi_0 (\mathbf{1}) \pi_0(\varphi) | a \rangle \\ &=&  \sum_{a'' \in \Omega}\langle a' | \pi_0(\varphi ) \pi_0(1_{a''}) \pi_0(\varphi) | a \rangle = 
 \sum_{a'' \in \Omega}  \langle a' | \pi_0(\varphi ) |a''\rangle \langle a'' |\pi_0(\varphi) | a \rangle \, ,
\end{eqnarray*}
where we have used the fact that $\pi_0$ is a representation of the $C^*$-algebra $\mathcal{F}(\mathbf{G})$ so that $\pi_0(\varphi \star \varphi) = \pi_0(\varphi ) \pi_0(\varphi)$, and so that the projectors $\pi_0(1_{a''}) = | a'' \rangle \langle a'' |$ provide a resolution of the identity in $\mathcal{H}_{\Omega}$.


\subsection{Factorizing states}

Generic states are insensitive to the `local' structure of the algebra of transitions codified by the composition law $\alpha \circ \beta$, that is, the amplitudes $\varphi (\alpha \circ \beta)$ are, in general, not directly related to the amplitudes of the factors $\varphi(\alpha)$ and $\varphi(\beta)$.   

However, there is a natural class of states that can be constructed out of the information provided by the factors, that is, states that are characterised in terms of the values of the associated smeared character $\varphi$ on a family of transitions generating the groupoid.  Then, we will say that a state, or the corresponding smeared character $\varphi$, is factorizable if for any pair of composable transitions $(\alpha, \beta) \in \mathbf{G}_2$:
\begin{equation}\label{local}
\varphi (\alpha \circ \beta) = \varphi (\alpha) \varphi (\beta) \, .
\end{equation}
The reversibility of transitions suggest the unitarity preserving property:
\begin{equation}\label{unitarity}
\quad \varphi (\alpha^{-1}) = \varphi(\alpha)^* \, ,
\end{equation}
that will be assumed in addition to the strict factorization property (\ref{local}).   

Notice that condition (\ref{unitarity}) is independent of the factorization condition (\ref{local}) and it can be lifted when dealing with open systems.
Note also that as a consequence of the factorization condition, Eq. (\ref{local}), $\varphi(1_x) = 1$ (because $\varphi(1_x\circ 1_x) = \varphi(1_x)$) and, in addition, $|\varphi(\alpha)| = 1$ because of the unitarity condition, Eq. (\ref{unitarity}).

It is important to remark here that factorizing states do not define (one-dimensional) representations of groupoids.
Even if Eq. (\ref{local}) could give the impression that the function $\varphi \colon \mathbf{G} \to \mathbb{C}$ defines a `linear representation' of the groupoid, this is not so.   
Indeed, a linear representation of as groupoid $\mathbf{G}\rightrightarrows \Omega$ is a functor $R$ from $\mathbf{G}$ to the category of linear spaces $\mathbf{Vect}$, that is, to any outcome $x\in \Omega$ we associate a linear space $V_x = R(x)$ and to any morphism, $\alpha \colon x \to y$, a linear map $R(\alpha) \colon R_x \to R_y$ in such a way that the structure defined by the composition law is preserved (i.e., $R(\alpha \circ \beta) = R(\alpha) R(\beta)$ and $R(1_x ) = \mathrm{id}_{V_x}$).    The simplest possibility would be to associate the 1-dimensional linear space $\mathbb{C}$ to each event $a\in \Omega$ (notice the $R(1_x)$ must be invertible, thus $R(x) \neq \{ \mathbf{0} \}$).  Thus, the total space would have dimension equal to the order of $\Omega$.   Hence, unless $|\Omega| =1$, as it happens in the case of ordinary groups, the smallest possible representation of a groupoid has dimension larger than 1 (such smallest representation is obviously irreducible and is what we have been calling the fundamental representation $\pi_0$ of the groupoid, \cite{Ib19}).

On the other hand, it is easy to see that, in general, a factorizable state $\rho$ does not define a representation of the algebra of the groupoid neither.   If $\rho \colon \mathbb{C}[\mathbf{G}] \to \mathbb{C}$ is the state such that $\rho (\mathbf{a}) = \sum_\alpha a_\alpha \, \varphi (\alpha )$ with $\varphi$ satisfying Eq. (\ref{local}), then it is not true, in general, that $\rho(\mathbf{a}\cdot \mathbf{b})$ agrees with the product $\rho(\mathbf{a} )\rho( \mathbf{b})$.    The reason for this is that, in the evaluation of $\rho(\mathbf{a}\cdot \mathbf{b})$, only the terms $\varphi(\alpha\circ \beta)$ with $\alpha$ and $\beta$ composable will appear, while in $\rho(\mathbf{a}) \rho( \mathbf{b})$ all products $\varphi(\alpha)\varphi(\beta)$ will contribute making the two of them different.

\medskip

Notice that the amplitude $\varphi (\alpha)$ of a factorizable state can always be written as:
\begin{equation}\label{s0}
\varphi (\alpha) = e^{is(\alpha)} \, ,
\end{equation}
for a real-valued function $s\colon \mathbf{G} \to \mathbb{R}$ satisfying the following properties:
\begin{equation}\label{s1}
s(1_x) = 0 \, , \qquad \forall x\in \Omega \, ,
\end{equation}
and
\begin{equation}\label{action}
s(\alpha \circ \beta) = s(\alpha) + s(\beta) \, , 
\end{equation}
for any pair of composable transitions $\alpha$, $\beta$.  Then, we get immediately that $s$ must satisfy:
\begin{equation}\label{s3}
s(\alpha^{-1}) = - s(\alpha) \, .
\end{equation}
We will call a real valued function $s$ on a groupoid satisfying the conditions (\ref{s1}), (\ref{action}), (\ref{s3}), an action.

Even if the discussion of the statistical interpretation of the formalism has been done without reference to any particular dynamics, the structure of factorizable states is strongly reminiscent of Dirac-Feynman definition of amplitudes in the standard space-time interpretation of quantum mechanics, and this is why we will call such function $s$ an action (actually, adding a continuity condition, we may obtain that factorizability implies the existence of a Lagrangian density, closing again Dirac's intuition of the role played by the Lagrangian in Quantum Mechanics, \cite{Di33}).   

We may ask now what properties must an action $s\colon \mathbf{G} \to \mathbb{R}$ possess, beyond those expressed in its definition, for the function $\varphi = e^{is}$ to define a state, that is, to be positive semi-definite, and, in that case, to satisfy the reproducing property. The answer is surprisingly straightforward (and extremely satisfactory): the functions $\varphi (\alpha) = e^{is(\alpha)}$ defined by means of actions are always positive semi-definite, that is, they define states, and those states are always reproducing, that is, they satisfy Feynman condition.

\begin{theorem}\label{local_states} Let $s\colon \mathbf{G} \to \mathbb{R}$ be an action on a finite groupoid $\mathbf{G}$.  Then, the function $\varphi = e^{is}$ is positive semi-definite and satisfies the reproductive property $\varphi = \varphi \star \varphi$.  We will call the state defined in this way the dynamical state of the theory defined by the action $s$.
\end{theorem}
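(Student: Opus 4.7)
The approach splits naturally into verifying positive semi-definiteness and the reproducing identity separately, both by direct calculation using only the action identities $s(1_x)=0$, $s(\alpha\circ\beta)=s(\alpha)+s(\beta)$, and $s(\alpha^{-1})=-s(\alpha)$. The key leverage is that these identities translate into the factorization property $\varphi(\alpha\circ\beta)=\varphi(\alpha)\varphi(\beta)$ and the unitarity $\varphi(\alpha^{-1})=\overline{\varphi(\alpha)}$ for $\varphi=e^{is}$.

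For positive semi-definiteness, I would fix $\xi_1,\dots,\xi_n\in\mathbb{C}$ and $\alpha_1,\dots,\alpha_n\in\mathbf{G}$. Whenever $t(\alpha_i)=t(\alpha_j)$, so that $\alpha_i^{-1}\circ\alpha_j$ is defined, factorization and unitarity give
\[
\varphi(\alpha_i^{-1}\circ\alpha_j)=\varphi(\alpha_i^{-1})\,\varphi(\alpha_j)=\overline{\varphi(\alpha_i)}\,\varphi(\alpha_j).
\]
Partitioning the index set by common target via $I_y=\{i:t(\alpha_i)=y\}$, the quadratic form appearing in the definition of positive semi-definiteness telescopes into a sum of moduli squared:
\[
\sum_{i,j=1}^{n}\bar\xi_i\xi_j\,\varphi(\alpha_i^{-1}\circ\alpha_j)\,\delta(t(\alpha_i),t(\alpha_j))=\sum_{y\in\Omega}\Bigl|\sum_{i\in I_y}\xi_i\,\varphi(\alpha_i)\Bigr|^{2}\ge 0,
\]
which is the required inequality.

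For the reproducing identity, additivity of $s$ makes every summand of the convolution collapse to a single value $e^{is(\gamma)}$:
\[
(\varphi\star\varphi)(\gamma)=\sum_{\alpha\circ\beta=\gamma}e^{i s(\alpha)+i s(\beta)}=\varphi(\gamma)\cdot N(\gamma),
\]
where $N(\gamma)$ counts the factorizations of $\gamma$. Parametrizing a decomposition by $\beta\in\mathbf{G}_{+}(s(\gamma))$ (which forces $\alpha=\gamma\circ\beta^{-1}$), $N(\gamma)=|\mathbf{G}_{+}(s(\gamma))|$ is a groupoid-intrinsic constant independent of $\gamma$, and in particular of the choice of action.

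The main delicate point lies in this last step: the raw convolution produces the overall factor $N(\gamma)$, and to match the stated identity $\varphi\star\varphi=\varphi$ on the nose one must invoke the Haar normalization convention of the finite groupoid $C^{*}$-algebra implicit in the paper (under which the probability measures on $\Omega$ and on the source fibers have total mass one, absorbing $N(\gamma)$). Once this convention is fixed, no further ingredient is needed beyond the action identities and the elementary partition-and-count arguments above; neither the GNS construction nor \Cref{reconstruction} enters the proof.
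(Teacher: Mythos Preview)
Your proof is correct and follows the same approach as the paper: partition by common target to rewrite the quadratic form as a sum of squares for positive semi-definiteness, and count factorizations of $\gamma$ for the reproducing identity. The only cosmetic differences are that the paper wraps the first argument in an (unnecessary) induction on $n$, and for the second it handles the constant $N(\gamma)=|\mathbf{G}_+(s(\gamma))|=|\mathbf{G}|/|\Omega|$ not through an implicit Haar convention but by explicitly renormalizing $\varphi(\alpha)=\frac{|\Omega|}{|\mathbf{G}|}\,e^{is(\alpha)}$ before computing the convolution.
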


\begin{proof}   Let $n \in \mathbb{N}$, $\xi_i \in \mathbb{C}$ and $\alpha_i\in \mathbf{G}$, $i= 1, \ldots, n$.  

We will prove that $e^{is}$ is positive semidefinite by induction on $n$, i.e., we will show that:
$$
S_n = \sum_{i,j=1}^n \bar{\xi}_i\xi_j e^{is(\alpha_i^{-1}\circ \alpha_j)} \geq 0 \, ,
$$
where only composable pairs $\alpha_i^{-1}\circ\alpha_j$ appear in the expansion of the sum (notice that if $\alpha_i^{-1}$ is composable with $\alpha_j$, then $\alpha_j^{-1}$ is composable with $\alpha_i$),
by complete induction on $n$. 

Thus, if $n =1$, there is only a complex number $\xi$ and a transition $\alpha$, and the sum $S_1 = |\xi|^2\geq 0$ is trivially non-negative.
We will explore also the cases $n = 2,3$ because they will provide the key for the induction argument.  

If $n = 2$, we will be considering complex numbers $\xi_1, \xi_2$ and transitions $\alpha_1$, $\alpha_2 \in \mathbf{G}$.  There will be two possibilities, either $\alpha_1$ and $\alpha_2$ are composable or they are not.  If they are, then we have:
$$
S_2 =  \sum_{i,j=1}^2 \bar{\xi}_i \xi_j e^{is(\alpha_i^{-1}\circ \alpha_j)} =  \sum_{i,j=1}^2 \bar{\xi}_i \xi_j e^{-is(\alpha_i) }e^{is( \alpha_j)} = |\xi_1 e^{is(\alpha_1)} + \xi_2 e^{is(\alpha_2)}|^2 \geq 0 \, ,
$$
while if $\alpha_1^{-1}$ and $\alpha_2$ are not composable then, $S_2 = |\xi_1|^2 + |\xi_2|^2 \geq 0$.  
To understand the general situation we may discuss the case $n = 3$ too.  Then we will have three complex numbers $\xi_1, \xi_2, \xi_3$ and three transitions $\alpha_1$, $\alpha_2$ and $\alpha_3$. There are three cases: all three transitions are composable, two are composable, say $\alpha_1,\alpha_2$ and one is not, and the three are not composable or disjoint.  In the first case a simple computation shows that:
$$
S_3 = |\xi_1 e^{is(\alpha_1)} + \xi_2 e^{is(\alpha_2)} + \xi_3 e^{is(\alpha_3)}|^2 \geq 0\, ,
$$
while in the second and third, we get respectively:
$$
S_3 = |\xi_1 e^{is(\alpha_1)} + \xi_2 e^{is(\alpha_2)}|^2 + |\xi_3|^2\geq 0 \, , \quad S_3 = |\xi_1|^2 + |\xi_2|^2 + |\xi_3|^2 \geq 0 \, .
$$
Let us consider $n$ arbitrary, then the relation $i \sim j$ if $\alpha_i^{-1}$ is composable with $\alpha_j$ (or in other words, if the targets of $\alpha_i$ and $\alpha_j$ are the same, $t(\alpha_i) = t(\alpha_j)$) is an equivalence relation on the set of indices $I_n = \{ 1, 2, \ldots, n\}$.   The set $I_n$ is decomposed into equivalence classes $I_{x} = \{ i_{x_1}, \ldots, i_{x_r} \}$ that will correspond to all transitions $\alpha_i$ such that $t(\alpha_i) = x$, and each class will have a number of elements $n_x \leq n$.   Then, if $n_x = n$, there is only one class, all pair of transitions $\alpha_i^{-1}$, $\alpha_j$ are composable and then:
$$
S_n = \big|\sum_{k=1}^n \xi_k e^{is(\alpha_k)} \big|^2 \geq 0 \, .
$$
On the other hand, if there is more than one equivalence class, then $n_x < n$ for all $x \in \Omega$, and   we have:
$$
S_n = \sum_{x\in \Omega} \sum_{j_x,k_x \in I_x} \bar{\xi}_{j_x}\xi_{k_x} e^{-is(\alpha_{j_x})}e^{is(\alpha_{k_x})} = \sum_{x \in \Omega} \big| \sum_{k_x\in I_x} \xi_{k_x} e^{is(\alpha_{k_x})} \big|^2 \geq 0 \, , 
$$
where in the last step in the previous computation we have used the induction hypothesis.  This shows that $\varphi = e^{is}$ is positive semidefinite.  

\medskip

To prove the reproducing property, we normalise the smeared character $\varphi$ properly as:
$$
\varphi (\alpha) = \frac{|\Omega|}{|\mathbf{G}|} e^{is(\alpha)} \, .
$$
Then, a simple computation shows that:
\begin{eqnarray}
\varphi \star \varphi (\gamma) &=& \frac{|\Omega|^2}{|\mathbf{G}|^2}  \sum_{\tiny{\begin{array}{c}(\alpha, \beta) \in \mathbf{G}_2 \\ \alpha \circ \beta = \gamma \end{array}}} e^{is(\alpha)} e^{is(\beta)} \nonumber \\ &=& \frac{|\Omega|^2}{|\mathbf{G}|^2} \sum_{\tiny{\begin{array}{c}(\alpha, \beta) \in \mathbf{G}_2 \\ \alpha \circ \beta = \gamma \end{array}}} e^{is(\alpha\circ \beta)} \nonumber \\ &=&  \frac{|\Omega|}{|\mathbf{G}|}  e^{is(\gamma)}  = \varphi (\gamma)\, , \label{sumclass}
\end{eqnarray}
where, in the step (\ref{sumclass}) in the previous computation, we have used that the argument of the sum, $e^{is(\alpha\circ \beta)}$, is constant and equal to $e^{is(\gamma)}$ whenever $\alpha \circ \beta = \gamma$, but because the number of composable transitions $\alpha$, $\beta$ such that $\alpha\circ \beta = \gamma$ is exactly $|\mathbf{G}|/|\Omega|$, then we get the required factor and the conclusion.  

Let us justify this last statment.  First, notice that, if $\gamma \colon x \to y$, then for any $\alpha \colon x \to z$ there is exactly one $\beta = \alpha^{-1}\circ \gamma$ such that $\alpha \circ \beta = \gamma$.
Consequently, the number of pair transitions factorising $\gamma\colon x \to y$ is $|\mathbf{G}_+(x)|$, but we also have $\sqcup_{x\in \Omega} \mathbf{G}_+(x) = \mathbf{G}$, which means $|\mathbf{G}| = |\Omega|  |\mathbf{G}_+(x)|$ and the statement is proved.
\end{proof}


 We can summarise all previous discussion by saying that we can understand the description of a quantum system in the groupoid formalism (which provides an abstraction of Schwinger algebra of measurements) as a grade-2 measure theory provided by an invariant quantum measure $\mu$.   Such quantum measure is characterised by a positive semi-definite function $\varphi$ on the groupoid, and for any action function $s$ on the groupoid, the function $\varphi = e^{is}$ is positive semi-definite, is factorizable, it  satisfies the reproducing property, and defines uniquely a quantum measure $\mu_s$ whose decoherence functional $D_s$ has Sorkin's form:
 $$
 D_s(\alpha, \beta) = e^{-iS(\alpha)} e^{iS(\beta)} \delta(t(\alpha), t(\beta)) \, .
 $$
 Here, $\alpha$ and $\beta$ denote two transitions in the groupoid $\mathbf{G}$ and we have made explicit the delta function of the targets.



\section{The statistical interpretation of Schwinger's transformation functions}\label{sec:formalism}

In the previous sections, it was discussed how the notion of state on the $C^*$-algebra of a quantum system described by a groupoid $\mathbf{G}\rightrightarrows \Omega$ provides a statistical interpretation of the theory in terms of Sorkin's notion of quantum measure and the theory of decoherence functionals, and clarifies the origin of Feynman-Dirac's amplitudes and their reproducing property. 

In this section, as anticipated in the introduction, we will provide a natural statistical interpretation of Schwinger's transformation functions by relying again on the key notion of states. This time, we will provide a natural interpretation of transition amplitudes on the fundamental representation of a given groupoid by using particularly simple states.  Moreover, a judiciously use of the the fundamental invariance of the description of the system with respect to changes of systems of observables will provide the desired interpretation.    


\subsection{Equivalence of algebras of observables}\label{sec:transitions}

It is a fundamental assumption of the theory developed so far that if we select a compatible set of observables $\mathbf{A}$ for the system, the algebra of observables of the system will contain the $C^*$-algebra of the goupoid $\mathbf{G}_\mathbf{A}$ determined by the system $\mathbf{A}$ \cite{Ib18b}.     The groupoid $\mathbf{G}_\mathbf{A}$ will consist of all possible physical transitions $\alpha \colon a \to a'$ among events $a \in \Omega_A$ determined by the set of compatible observables $\mathbf{A}$.   

Given a description provided by the groupoid $\mathbf{G}_\mathbf{A}$, we may consider a finer description of the system by using another groupoid $\mathbf{G}_{\mathbf{A}'}$ such that $\mathbf{G}_\mathbf{A} \subset \mathbf{G}_{\mathbf{A}'}$ is a subgroupoid.   This will imply that the physical description of the system provided by $\mathbf{G}_\mathbf{A}$ is consistent with the physical description of the system provided by $\mathbf{G}_{\mathbf{A}'}$.   Then, the corresponding groupoid algebras will satisfy $\mathbb{C}[\mathbf{G}_\mathbf{A}] \subset \mathbb{C}[\mathbf{G}_{\mathbf{A}'}]$, and the $C^*$-algebras of observables provided by both descriptions will  be related accordingly.   In this sense, we will say that a description of a quantum system is complete if the algebra of observables $\mathcal{A}_\mathbf{G} = C^*(\mathbf{G})$ provided by the groupoid $\mathbf{G} \rightrightarrows \Omega$ is maximal.

Let us suppose that the groupoids $\mathbf{G}_\mathbf{A} \rightrightarrows \Omega_A$ and $\mathbf{G}_\mathbf{B} \rightrightarrows \Omega_B$ provide complete descriptions of the same quantum system.  It is just natural to assume that the corresponding algebras of observables $C^*(\mathbf{G}_\mathbf{A} )$ and $C^*(\mathbf{G}_\mathbf{B} )$ are isomorphic because if this were not the case there would be physical states that could be obtained in one description but not in the other.  In other words, a complete description of the system cannot depend on the choice of a particular set of compatible observables.   Then, there would be an isomorphism of $C^*$-algebras:  
$$
\tau_{AB} \colon C^*(\mathbf{G}_\mathbf{A}) \to C^*(\mathbf{G}_\mathbf{B}) \, ,
$$ between the corresponding $C^*$-algebras in both `reference frames' $\mathbf{A}$ and $\mathbf{B}$.   This independence of the description with respect to the chosen `reference frame' was stated as a `relativity principle' in \cite{Ib18b} that will be developed in what follows.   

Together with the isomorphism $\tau_{AB}$, there is an isomorphism $ \tau_{BA} \colon C^*(\mathbf{G}_\mathbf{B}) \to C^*(\mathbf{G}_\mathbf{A})$, and then it is natural to conclude that\footnote{Notice that a categorical approach to these notions will just impose that $\tau_{AB}$ and $\tau_{BA}^{-1}$ would differ on an automorphism of the underlying algebras, however, we will just consider the strict interpretation of equalities here or, in other words that the categorical notions behind the structures we are dealing with are defined in the strong sense.} $\tau_{AB} = \tau_{BA}^{-1}$.
In the same vein if $\mathbf{C}$ is another complete system of observables yet, then there will exists isomorphisms of $C^*$-algebras $\tau_{BC} \colon C^*(\mathbf{G}_\mathbf{B}) \to C^*(\mathbf{G}_\mathbf{C})$ and $\tau_{AC} \colon C^*(\mathbf{G}_\mathbf{A}) \to C^*(\mathbf{G}_\mathbf{C})$, that will be assumed to satisfy the natural composition law:
$$
\tau_{BC} \circ \tau_{AB} = \tau_{AC} \, .
$$


\subsection{Transition amplitudes again}

Finally, let us recall (see \cite{Ib18b}) that an observable is a function $f \in \mathcal{F}(\mathbf{G}_\mathbf{A}) \subset C^*(\mathbf{G}_\mathbf{A})$ such that $f^* = f$, that is, a self-adjoint element in the $C^*$-algebra of the groupoid.
We will define the \textit{transition amplitude} of the observable $f$ between two events $a$ and $a'$
as the sum\footnote{We will assume that the groupoid is finite, if not, obvious changes in the formulas replacing sums by integrals with respect to properly chosen measures should be introduced.} of the values of the observable over all transitions connecting $a$ and $a'$ and we will denote it by $\langle a' ; f ;  a \rangle$:
$$
\langle a' ; f ;  a \rangle = \sum_{\alpha \in \mathbf{G}(a',a)} f(\alpha) \, .
$$
Notice that 
$$
\langle a' ; f^* ;  a \rangle  =    \sum_{\alpha \in \mathbf{G}(a',a)} f^*(\alpha) =  \sum_{\beta \in \mathbf{G}(a,a')} \bar{f}(\beta ) = \overline{\langle a' ; f ;  a \rangle} \, ,
$$
and if we denote by $\langle a ;  a' \rangle$ the amplitude corresponding to the unit $\mathbf{1}$, that is, $\langle  a ;  a' \rangle = \langle a ; \mathbf{1} ;a' \rangle$, then:
$$
\langle a ;  a' \rangle = \delta(a,a') \, ,
$$
because 
$$
\langle a ;  a \rangle = \langle a ; \mathbf{1} ;a \rangle =  \sum_{a'\in \Omega}\langle a ; \delta_{a'} ;a \rangle = \sum_{a'\in \Omega} \sum_{\alpha \in \mathbf{G}(a,a)} \delta_{a'}(\alpha) = 1 \, ,
$$
and $\langle a ;  a' \rangle = 0$, if $a \neq a'$, as $\mathbf{1} = \sum_{a \in \Omega} \delta_a$ must be evaluated on transitions $\alpha$ with different source and target. 

Another interesting observable is provided by the `incidence matrix' observable $\mathbb{I} = \sum_{\alpha \in \mathbf{G}} \delta_\alpha$.   Notice that $\mathbb{I}^* = \mathbb{I}$ and:
$$
\langle a'; \mathbb{I}; a \rangle = \sum_{\alpha \in \mathbf{G}(a',a) } \mathbb{I}(\alpha) = |\mathbf{G}(a',a) | \, .
$$
It is also relevant to point out the if $\varphi$ is a positive semi-definite function on $\mathbf{G}$ then $\varphi^* = \varphi$ (notice that because $\sum_{i,j=1}^n \bar{\xi}_i \xi_j \varphi (\alpha_i^{-1}\circ \alpha_j) \geq 0$ for all $\xi_i$, then $\overline{\varphi (\alpha_i^{-1}\circ \alpha_j)} = \varphi (\alpha_j^{-1}\circ \alpha_i)$ for all composable $\alpha_i^{-1}$, $\alpha_j$, but then it holds for all $\alpha$), and 
$$
\langle a'; \varphi ; a\rangle =   \sum_{\alpha \in \mathbf{G}(a',a) } \varphi (\alpha) = \varphi_{a'a} \, , 
$$
and the transition amplitude $\langle a'; \varphi ; a\rangle$ is just the transition amplitude of the state $\varphi$ considered in Sect. \ref{sec:reproducing}.


\subsection{The states $\rho_x$ and their associated GNS constructions}\label{sec:rho_x}

To relate the definition of transition amplitudes with the standard interpretation of such functions in terms of vector-states and operators, and eventually with Schwinger's transformation functions, we have to select a representation of the theory.   

As discussed in Sect. \ref{sec:GNS}, the representations of the $C^*$-algebra $\mathbb{C}[\mathbf{G}]$ are defined via the GNS construction.  Hence, following the spirit so far, we will choose a particular state that will provide a particular representation of transition amplitudes.   For that, and as a further illustration of the GNS construction, we will consider the simple state $\rho_x$ defined by the function $\delta_x$, that is, $\rho_x(\mathbf{a}) = a_x$ where $\mathbf{a} = \sum_{\alpha} a_\alpha \, \alpha$, that is, $\rho_x$ assigns to any virtual transition $\mathbf{a}$ the coefficient of the unit $1_x$.   Clearly $\rho_a(\mathbf{1}) = 1$ and 
\begin{equation}\label{rhox}
\rho_x(\mathbf{a}^*\cdot \mathbf{a}) = \sum_{\alpha \in \mathbf{G}_+(x)} | a_\alpha |^2 \geq 0 \, ,
\end{equation}
that shows that $\rho_x$ is indeed a state.

Following the GNS construction described in Sect. \ref{sec:GNS} (see also \cite[Sect. 4]{Ib18b}), we see that
the Hilbert space $\mathcal{H}_{\rho_x}$, denoted in what follows by $\mathcal{H}_x$, is the Hilbert space of functions $\Phi$ defined on $\mathbf{G}_+(x)$ with the standard inner product.    In fact, from Eq. (\ref{rhox}) we see that the Gelfand ideal $\mathcal{J}_x = \{ \mathbf{a} \mid \rho_x(\mathbf{a}^*\cdot \mathbf{a})  = 0 \}$ consists of all $\mathbf{a}$ such that the coefficients of transitions $\alpha \in \mathbf{G}_+(x)$ vanish.  That means that the quotient space $\mathbb{C}[\mathbf{G}]/\mathcal{J}_x$ can be identified with the space of transitions in $\mathbf{G}_+(x)$, and thus, given any $\mathbf{a}\in \mathbb{C}[\mathbf{G}]$, we will use the notation $\mathbf{a}_x$ for the restriction to $\mathbf{G}_+(x)$, i.e., $\mathbf{a}_x$ is obtained from $\mathbf{a}$ by putting to zero all coefficients $a_\alpha$ with $\alpha \notin \mathbf{G}_+(x)$ or, in other words, $\mathbf{a}_x = \mathbf{a}\cdot 1_x$.  Moreover, the  inner product $\langle \cdot, \cdot \rangle_x$ in $\mathcal{H}_x$ induced by $\rho_x$ is given by, Eq. (\ref{inner_rho}):
\begin{equation}\label{inner_x}
\langle \mathbf{a}_x, \mathbf{a}'_x \rangle_x = \rho_x (\mathbf{a}^* \cdot \mathbf{a}') =  \sum_{\alpha \in \mathbf{G}_+(x)} \bar{a}_\alpha a'_\alpha \, .
\end{equation}

In particular, the unit $\mathbf{1}$ determines the fundamental vector $ \mathbf{1}_x = 1_x \in \mathcal{H}_x$.   The algebra $\mathbb{C}[\mathbf{G}]$ is represented in $\mathcal{H}_x$ as $\pi_x(\mathbf{a}) \mathbf{a}'_x = (\mathbf{a}\cdot \mathbf{a}')_x = \mathbf{a}\cdot \mathbf{a}'_x$, and clearly $1_x$ is a cyclic vector for such representation.  Now, instead of denoting by $|0\rangle$ the ground vector of the representation $\pi_x$, for convenience, we will denote it by $|x\rangle$. Thus, if $\mathbf{a}$ is a virtual transition, we have:
$$
\pi_x(\mathbf{a}) |x\rangle = \mathbf{a}_x \, .
$$
In order to have a homogeneous notation, we can write as $\mathbf{a}_x = |\mathbf{a}\rangle_x$ where the subscript $x$ indicates that the vector $|\mathbf{a}\rangle_x$ belongs to the Hilbert space $\mathcal{H}_x$. Thus, using this notation in Eq. (\ref{inner_x}), we have:
$$
\langle \mathbf{a}_x, \mathbf{a}'_x \rangle =  \sum_{\alpha \in \mathbf{G}_+(x)} \bar{a}_\alpha a'_\alpha = \langle \mathbf{a} \mid \mathbf{a}' \rangle_x \, ,
$$
which is the convenient form of expressing the inner product that will be used in the following.  With this notation, the amplitude defined by the state $\rho_x$ on a virtual transition $\mathbf{a}$ can be written as (recall Eq. (\ref{amplitude_phi})):
\begin{equation}\label{rho_xa}
\rho_x (\mathbf{a}) = \langle x \mid  \mathbf{a} \rangle_x \, .
\end{equation}


\subsection{Transformation functions and transition amplitudes}

We are ready to interpret Schwinger's transformation functions $\langle b | a \rangle$ as transition amplitudes and hence to provide them with a proper statistical interpretation.   Let us recall that, according to Schwinger, the transformation function $\langle b | a \rangle$ ``is a number characterising the statistical relation relation between the states $b$ and $a$'', and reflects the fact ``that only a determinate fraction of the systems emerging from the first stage will be accepted by the second stage''.  

In the formalism we have developed, Schwinger's transformation function will be given by the isomorphism $\tau_{AB}$ that relates the $\mathbf{A}$ and $\mathbf{B}$ descriptions of the system, and we would like to provide a  statistical interpretation of the complex number $\langle b|a\rangle$ appearing in Schwinger's formalism as transition amplitude.  For that, consider that in the description provided by the complete family $\mathbf{B}$ of observables we want to understand the statistical relation between the outcome $b$, i.e., the transition $1_b$ in the algebra $\mathbb{C}[\mathbf{G}_\mathbf{B}]$, and the transition $1_a$ corresponding to the outcome $a$ with respect to the description provided by the family $\mathbf {A}$, that is, the algebra $\mathbb{C}[\mathbf{G}_\mathbf{A}]$. Then, such relation is provided by the amplitude of the state $\rho_a$ defined by $a$ in $\mathbb{C}[\mathbf{G}_\mathbf{A}]$ on the transition defined by $\tau_{BA}(1_b) \in \mathbb{C}[\mathbf{G}_\mathbf{A}]$.  But then, using Eq. (\ref{rho_xa}), we get:
$$
\rho_a (\tau_{BA}{1_b}) = \langle a | \tau_{BA}(1_b) \rangle_a \, .
$$
If we denote the vector state in the Hilbert space $\mathcal{H}_a$ defined by the transition $\tau_{BA}(1_b)$ by $|b\rangle$, that is:
$$
| b \rangle = \pi_a(\tau_{BA}(1_b))|a\rangle \, ,
$$ 
we get that the transition amplitude of the event $b$ with respect to the state defined by $a$, that we may denote consistently as $\varphi_{ba}$, is given by:
$$
\varphi_{ba} = \langle b | a \rangle \, .
$$
Notice that the we could have proceeded the other way around, exchanging the roles of $a$ and $b$, and then, repeating the argument, we get that the transition amplitude $\varphi_{ab}$ of the event $a$ with respect to the state defined by $b$, would have been:
$$
\varphi_{ab} = \langle a | b \rangle = \overline{\langle b | a \rangle} = \overline{\varphi}_{ba} \, .
$$
Notice that the previous identities follow from the duality of states and transitions and the properties of the isomorphisms $\tau_{AB}$, that is:
$$
\rho_a ( \tau_{BA} (1_b)) = \rho_{\tau_{AB}(1_a)}(1_b) = \rho_b (\tau_{AB}(1_a)) \, , \qquad \forall a \in \Omega_A, b \in \Omega_B \, .
$$


\section{Some simple applications:  the qubit and the two-slit experiment}

\subsection{The qubit}  We can illustrate the ideas discussed along this paper by using the qubit system.   The qubit system is the simplest nontrivial quantum system and in the groupoid formalism correspond to the groupoid defined by the graph $A_2$, that is, the space  of outcomes $\Omega = \{ + , - \}$ consists of two events $+$, $-$, and there is one non-trivial transition $\alpha \colon - \to +$.   In addition to this, there are two units $1_\pm$ and the inverse $\alpha^{-1}\colon + \to -$ of the transition $\alpha$, with $\alpha^{-1}\circ \alpha = 1_-$, $\alpha \circ \alpha^{-1} = 1_+$ (see Fig. \ref{A2}).   This scheme abstracts the simplest situation of a physical system evolving in time and producing two outcomes denoted by $+$ and $-$.  

\begin{figure}[h]
\centering
\begin{tikzpicture} 
\fill (0,0) circle  (0.1);
\fill (3.5,0) circle  (0.1);
\draw [thick,->] (0.3,0.1) arc (107:73:5);
\draw [thick,->] (3.2,-0.1) arc (288:253:5);
\node [left]  at  (-0.1,0)   {$+$};
\node [right] at (3.9,0)   {$-$};
\node [above] at (2,0.6)   {$\alpha$};
\node [below] at (2,-0.6)   {$\alpha^{-1}$};
\end{tikzpicture}
\caption{The abstract qubit, $A_2$.}
\label{A2}
\end{figure}
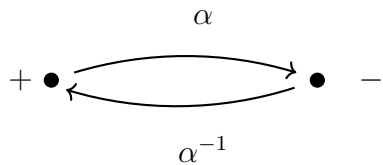

The corresponding groupoid will be denoted by $\mathbf{A}_2$ again and its algebra $\mathbb{C}[\mathbf{A}_2] = \{ \mathbf{a} = a_+ 1_+ + a_- 1_- + a_\alpha \alpha + a_{\alpha^{-1}} \alpha^{-1} \mid a_\pm, a_\alpha, a_{\alpha^{-1}} \in \mathbb{C}\}$ is easily seen to be isomorphic to the algebra $M_2(\mathbb{C})$ of $2\times 2$ complex matrices.  The identification is provided by the assignments:
$$
1_+ \mapsto \left[ \begin{array}{cc} 1 & 0 \\ 0 & 0 \end{array} \right] \, , \quad 1_- \mapsto  \left[ \begin{array}{cc} 0 & 0 \\ 0 & 1 \end{array} \right] \, ,\quad 
\alpha \mapsto  \left[ \begin{array}{cc} 0 & 0 \\ 1 & 0 \end{array} \right] \, , \quad \alpha^{-1} \mapsto  \left[ \begin{array}{cc} 0 & 0 \\ 1 & 0 \end{array} \right]  \, .
$$
Then, a virtual transition $\mathbf{a}$ is associated to the matrix:
$$
A = \left[ \begin{array}{lc} a_+ & a_\alpha \\ a_{\alpha^{-1}} & a_- \end{array} \right] \, , 
$$
and $a^*$ is associated to the matrix $A^\dagger$.  The $C^*$ norm $||\cdot ||$ is just the matrix operator norm and the fundamental representation $\pi_0$ of the algebra becomes the natural defining representation of $M_2(\mathbb{C})$ on $\mathbb{C}^2$.   The vectors associated to the unit elements $1_\pm$ are given by:
$$
| + \rangle = \left[ \begin{array}{c} 1 \\ 0 \end{array} \right] \, , \qquad | - \rangle = \left[ \begin{array}{c} 0 \\ 1 \end{array} \right] \, ,
$$ 
and thus an arbitrary vector in $\mathcal{H}_2 = \mathbb{C}^2$ is written as $|\psi \rangle = \psi_+ |+\rangle + \psi_- | - \rangle$. 

The space of states of the groupoid algebra $\mathbb{C}[\mathbf{A}_2]$ can be identified with the space of density operators on $\mathcal{H}_2$, that is, normalized non-negative, self-adjoint operators $\hat{\rho}$ on $\mathcal{H}_2$.   Density operators can be parametrized as:
$$
\hat{\rho} = \frac{1}{2} (\mathbb{I} - \mathbf{r}\cdot \boldsymbol{\sigma}) \, ,
$$
with $ \mathbf{r} \in \mathbb{R}^3$ a vector in Bloch's sphere, $r = || \mathbf{r} || \leq 1$, and $ \boldsymbol{\sigma} = (\sigma_1,\sigma_2, \sigma_3)$, the standard Pauli matrices.

According to Thm. \ref{local_states}, factorizable states have the form $\varphi = e^{is}$, with $s$ and action function.  Then, let $s \colon \mathbf{A}_2 \to \mathbb{R}$ given by:
$$
s(1_\pm ) = 0 \, , \qquad s(\alpha) = - s(\alpha^{-1}) = S \, ,
$$
with $S$ a real number.
Clearly, the function $s$ defined in this way satisfies the additive property (\ref{action}) and the state defined by $\rho_S(\mathbf{a}) = \sum_{i,j} \bar{a}_i a_j \varphi (\alpha_i^{-1}\circ \alpha_j)$ is a factorizable (and reproducing) state.  
The characteristic function $\varphi_s$ defined by the action $s$ is given by:
$$
\varphi_s (1_\pm ) = 1 \, , \qquad \varphi_s (\alpha) = \overline{\varphi_s (\alpha^{-1})} = e^{-iS} \, ,
$$ 
and the associated state $\rho_s$ is given by:
$$
\hat{\rho}_s = \frac{1}{2} \left[ \begin{array}{cc} 1 & e^{-iS} \\ e^{iS} & 1 \end{array} \right] \, .
$$
Notice that $\hat{\rho}_s \hat{\rho}_s = \hat{\rho}_s$, thus it satisfies the reproducing property (it can also be checked directly that $\varphi_s \star \varphi_s = \varphi_s$).

The decoherence functional defined by this state is given by the $4\times 4$ matrix $D_s$ whose entries $(i,j)$ correspond to the values $D_s(\alpha_i, \alpha_j) = \frac{1}{4} \varphi_s (\alpha_i^{-1}\circ \alpha_j) \delta(t(\alpha_i),t(\alpha_j))$, with $\alpha_i$ running through the list $1_+,1_-, \alpha, \alpha^{-1}$.  Thus, for instance, $D_s (1_+,1_+) = \frac{1}{2} \varphi_s (1_+^{-1}\circ 1_+) = 1/4$, $D_s (1_+,1_-) = \frac{1}{2} \varphi_s (1_+^{-1}\circ 1_-) =0,$ and so on.  Therefore, we finally get:
$$
D_s = \frac{1}{4} \left[ \begin{array}{cccc} 1 & 0 &  e^{-iS} & 0  \\ 0 & 1 & 0 & e^{iS} \\  
e^{iS} & 0 & 1 & 0 \\ 0 & e^{-iS} & 0 & 1 
\end{array} \right] \, .
$$
As it was discussed in the main text, the decoherence functional describes the structure of the quantum measure $\mu_s$, and hence the statistical interpretation associated to the system $\mathbf{A}_2$ in the state $\rho_s$.


\subsection{The double slit experiment}

In order to understand better some of the implications of the previous discussion, it is revealing to compare the qubit system with the double slit experiment. For the purposes of the present paper, we will use the analysis of the double slit experiment carried on in \cite{Ga09} in the coarse-graining histories description\footnote{After reading this, it should be clear that an analysis following similar arguments could be performed for the $n$-slit experiment or more complicated systems like Kochen-Specker system \cite[Chap. 2]{Ga09}.}.  We will reproduce succinctly the argument in \cite{Ga09} in order to facilitate the comparison with the previous results.

\begin{figure}[h]
\centering
 \resizebox{12cm}{5cm}{\includegraphics{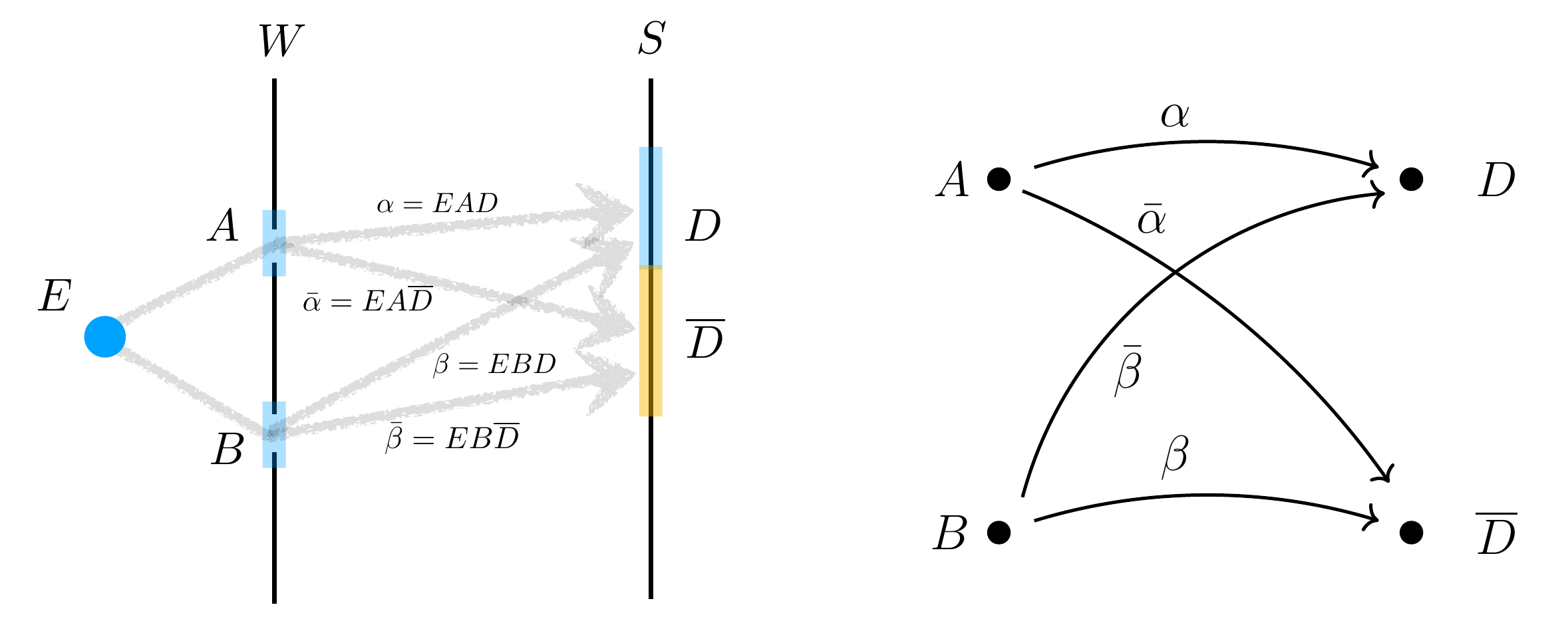}} 
\caption{A coarse graining interpretation of the double slit experiment (left). The corresponding double slit quiver: $U_{2}$ (right).}
\label{U2}
\end{figure}

Consider an idealised double slit system as sketched in Fig. \ref{U2} (left), where a particle is fired from an emitter $E$ and can pass through slits $A$ or $B$ on a wall $W$ before ending on the final screen $S$ either at the detector $D$ (which is located for instance on a dark fringe) or elsewhere, $\overline{D}$.  

In \cite{Ga09} the interpretation of the system is provided in terms of a Hilbert space, initial vector state $|\Psi \rangle$ and projectors $P_A$ corresponding to finding the particle at slit $A$. The projectors $P_B$, $P_D$ are defined in a similar fashion an $P_{\overline{D}} = I - P_D$.   We are not interested in such analysis here as we want to provide an algebraic description of it in terms of the structures discussed in the groupoid formalism.   For that, we will identify a family of `outcomes' given by $A$ and $B$ that will correspond to the particle passing through the slits $A$ or $B$ respectively, and `outcomes' $D$ and $\overline{D}$ corresponding to the particle hitting the region $D$ or $\overline{D}$ on the screen.  Hence, the space of events in this coarse-grained description of the system is finite and has 4 elements, that is, $\Omega = \{ A,B,D,\overline{D}\}$.  

Note that in this picture the notion of outcome/event is not related to a complete family of compatible measurements.  We do not even assume that there are actual detectors at the slits, but we are considering that it would be possible to determine that the particle is located near $A$ precisely enough to discard that it would be close to $B$ and conversely\footnote{This conception of events are closely related to the idealized notion of measurement used by Feynman to talk about trajectories of particles and to the notion of events in Sorkin's approach.}

The physical transitions of the system include the histories (see Fig. \ref{U2}) $\alpha = EAD$, indicated by  the transition $\alpha \colon A \to D$ such that the fired particle causes the event $A$ and consecutively $D$; $\beta = EBD$, that is, the transition $\beta\colon B \to D$ representing the histories that cause the event $B$ and then $D$.  Apart from $\alpha$ and $\beta$, there are two more transitions $\bar{\alpha} = EA\overline{D} \colon A \to \overline{D}$, $\bar{\beta} = EB\overline{D} \colon B \to \overline{B}$ with similar meaning (notice that $\bar{\alpha} \neq \alpha^{-1}$).  The collection of transitions $U_2 = \{ \alpha, \bar{\alpha}, \beta, \bar{\beta} \}$\footnote{The notation $U_2$ corresponds to the notion of `utility' graph used in graph theory.} do not define a groupoid but rather a quiver (see Fig. \ref{U2} for the pictorical representation of it) and they correspond to the family of coarse-grained histories in the description of \cite{Ga09}.    From this point of view, note that it was not necessary to consider the event $E$ since all relevant physical transitions assume that the particle has been fired.

The quiver $U_2$ generates a groupoid $\mathbf{G}(U_2)$ by adding the units $1_A, 1_B, 1_D, 1_{\overline{D}}$, the inverses $\alpha^{-1}, \bar{\alpha}^{-1}, \beta^{-1}, \bar{\beta}^{-1}$ and four more transitions corresponding to $\gamma_{AB} = \beta^{-1}\circ \alpha \colon A \to B$, $\gamma_{D\overline{D}} = \overline{\beta}^{-1}\circ \beta \colon D \to \overline{D}$, etc. (see the elementary introduction to the theory of groupoids and their representations in \cite{Ib19}).   Thus, the order of the groupoid $\mathbf{G}(U_2)$ is 16 and it can be identified with the groupoid of pairs of $\Omega$.  Of course, we may argue about the physical meaning of the transitions $\gamma_{AB}$, $\gamma_{D\overline{D}}$, and so on, as well as on the physical meaning of the inverses $\alpha^{-1}$, $\beta^{-1}$, and so on.  There are no physical reasons to exclude them.  Feynman's microscopic reversibility principle implies the consideration of the inverse transitions $\alpha^{-1}$, etc., in the analysis of the system and then, because of logical consistency, of the transitions $\gamma_{AB}$, $\gamma_{D\overline{D}}$, etc.   There is however no reason to consider states of the system where such transitions could actually happen, that is, they can be precluded so that the quantum measure describing the statistical properties of the system takes the value zero on them. This is exactly the point of view that we will take in our analysis.   We will construct various states of the system possessing this property.

The construction of a quantum measure on $\mathbf{G}(U_2)$ considered as a coarse-grained histories description of the actual system is associated to a state on the algebra $\mathbb{C}[\mathbf{G}(U_2)]$ of the groupoid.  In particular, factorizable states, which are the ones that lead to a dynamical interpretation of the theory, have associated characteristic functions $\varphi \colon \mathbf{G}(U_2) \to \mathbb{C}$ of the form: $\varphi  = e^{is}$ (up to a normalization factor), with $s$ an action functional on the groupoid.   In our case, because $\mathbf{G}(U_2)$ is generated by the utility quiver $U_2$,  it suffices to give the values of $s$ on the transitions $\alpha, \bar{\alpha}, \beta, \bar{\beta}$, that is, in the histories $EAD$, $EA\overline{D}$, $EBD$, and $EB\overline{D}$.   Thus, we may assume that:
$$
s(\alpha ) = s(\beta)  + \delta= S_1 \, , \qquad s(\bar{\alpha}) = s(\bar{\beta}) = S_2 \, ,
$$
where $\delta$, is a phase related to the difference between the physical paths when the particle follows the trajectories $EAD$ and $EBD$ respectively.   A similar phase could be introduced in the action for $\bar{\alpha}$ and $\bar{\beta}$, however, because of the particular configuration of the experiment, they have been chosen to be equal.

Notice that the values of $s$ in all other transitions are determined by the properties of action functionals, for instance, $s(1_A) = 0$, $s(\alpha^{-1}) = - S_1$, and so on.   In particular, $s(\gamma_{BA}) = s(\beta^{-1}\circ \alpha) = \delta$. Thus, the decoherence functional defined by the characteristic function $\varphi$ is codified in a $16\times 16$ matrix $D$ whose entries are given by the numbers $D(\alpha_i,\alpha_j)$.    If we concentrate ourselves in the $4\times 4$ submatrix $D_{U_2}$ corresponding to the quiver $U_2$, that is, corresponding to the transitions $\alpha, \bar{\alpha}, \beta$ and $\bar{\beta}$, we will get:
$$
D_{U_2} = \frac{1}{16} \left[ \begin{array}{cccc} 1 & e^{i\delta} &  0 & 0  \\ e^{-i\delta} & 1 & 0 & 0 \\  
0 & 0 & 1 & 1 \\ 0 & 0 & 1 & 1 
\end{array} \right] \, .
$$

Thus, in the particular instance $\delta = \pi$, we will have the matrix describing the quantum measure $\mu$ associated to the standard double slit experiment interpretation.  Notice that in such case the measure of the set $\mathbf{V} = \{ \alpha = EAD, \beta = EBD \}$ is given by:
$$
\mu(\mathbf{V}) = D_{U_2}(\mathbf{V}, \mathbf{V}) = \sum_{\alpha_1,\alpha_2 \in \mathbf{V} } D_{U_2}(\alpha_1, \alpha_2) = 0 \, ,
$$
in accordance with the fact the the detector is in a dark fringe, that is, the arrival of particles to it is precluded.

Notice that we may change the outcomes, either by moving the wall or the detectors, that is, we modify the outcomes $A',B',D',\overline{D}'$ and the transitions $\alpha', \beta',\bar{\alpha}', \bar{\beta}'$. Then the description of state will change accordingly.   Notice that there will be an isomorphism $\tau$ from the group algebra of the original groupoid $\mathbf{G}(U_2)$ and the one obtained by using the primed data.

Notice that the transitions ending at $D$ have no interference with histories ending at $\overline{D}$. This is a general feature of the groupoid formulation and is due to the composability condition among them.  More sophisticated experiments can be easily analyzed using the previous ideas.
In particular, temporal extended measurements modelled by using a multi-layered slit experiment.   These and other applications will be discussed in forthcoming papers.



\section{Conclusions and discussion}

 A unified description of Feynman's composition law for amplitudes and Schwinger's transformation functions is provided within the groupoid framework of Quantum Mechanics recently developed in \cite{Ib18,Ib18b}.  An analysis of the statistical interpretation of the formalism is provided using as a fundamental notion the $C^*$ algebra of the groupoid and their states.  Actually, it is shown that any state on the algebra of virtual transitions defines a decoherence functional (by means of the corresponding smeared character) and consequently a grade-2 measure, or a quantum measure in Sorkin's statistical intepretation of quantum mechanics.    Then, either by starting from a quantum measure, or a state on the groupoid algebra, there is a natural notion of amplitudes, called in the text \textit{transition amplitudes}, which subsume the statistical interpretation of the theory.    The groupoids based formalism provides a 'sum-over-histories'-like formula to compute the transition amplitudes and a natural theory of their representations in terms of vector-valued measures.  
 
The states, or decoherence functionals, leading to Feynman's composition law are indentified as idempotent positive semi-definite function on the groupoid and, moreover, a natural factorization condition, isolates those states whose amplitudes satisfy Dirac-Feynamn's principle, that is, they have the form $e^{is}$, with a $s$ an action-like function defined on the groupoid of transitions.    Such states, called \textit{factorizable} in the text, can be given a dynamical interpretation using a dynamical principle for the action function $s$ as in Schwinger's original setting or, alternatively, by using Feynman's construction of the wave function and the corresponding Schr\"odinger's equation.     These ideas will be explored and will constitute the main argument of a forthcoming work.

The work developed in this paper suggests a histories interpretation of the groupoid formalism.   The notion of transition, the abstract Schwinger's notion of selective measurement that changes the state of the system, has a clear dynamical meaning, however, in Schwinger's conceptualisation, such transitions are elementary and not subjected to further scrutiny, while a dynamical description of the change of a system involves an analysis, that is, a decomposition of such change.    This suggest a histories-based approach to the groupoid formalism where the composition of transitions would be interpreted dynamically.   In this sense, the formalism described in the present paper can be understood as a coarse-grained histories interpretation of Schwinger's algebra, where only the sources and targets, i.e., the events of the theory, are selected.    A fine-grained histories description of the theory is needed to provide a proper interpretation of the dynamical nature of the aforementioned factorizable states, and then, of Schwinger's dynamical principle.   As commented before, this will be the objective of another work.


\section*{Acknowledgments}

The authors acknowledge financial support from the Spanish Ministry of Economy and Competitiveness, through the Severo Ochoa Programme for Centres of Excellence in RD (SEV-2015/0554).
AI would like to thank partial support provided by the MINECO research project  MTM2017-84098-P  and QUITEMAD+, S2013/ICE-2801.   GM would like to thank partial financial support provided by the Santander/UC3M Excellence  Chair Program 2019/20.




\begin{thebibliography}{0}

\bibitem{Fe05} R. P. Feynman.  \textit{Feynman's Thesis: A New Approach to Quantum Theory}.   Editor L. M. Brown, World Scientific (2005).  Reprinted from R. P. Feynman, \textit{The principle of least action in Quantum Mechanics} (1942). 

\bibitem{Fe48} R. P. Feynman. \textit{Space-time approach to non-relativistic quantum mechanics}.  Rev. Mod. Phys., \textbf{20}, 367--387 (1948).

\bibitem{Di33} P.A.M. Dirac. The Lagrangian in Quantum Mechanics.
 \textit{Physikalische Zeitschrift der Sovietunion}, Band 3, Heft 1 (1933) 312--320.  Also available as an appendix in 
Feynman's Thesis:  \textit{A New Approach to Quantum Theory}.  Edited by Laurie M.Brown 
World Scientific.  
  
\bibitem{Yo68} W. Yourgrau, S. Mandelstam.  \textit{Variational Principles in Dynamics and Quantum Theory}.  Saunders, Philadelphia, 3rd ed. (1968).
%
\bibitem{Sc91}  J. Schwinger.  \textit{Quantum Kinematics and Dynamics}. Frontiers in Physics, W.A. Benjamin, Inc., (New York 1970); ibid., \textit{Quantum Kinematics and Dynamics}.  Advanced Book Classics,  Westview Press (Perseus Books Group 1991); \textit{ibid.,} \textit{Quantum Mechanics Symbolism of
Atomic measurements,}, edited by  Berthold-Georg Englert. Springer--Verlag, (Berlin 2001).

\bibitem{Sc51}  J. Schwinger.  \emph{The Theory of Quantized Fields I}.  Phys. Rev., {\bf 82} (6) 914-927 (1951); \emph{ibid.} \emph{The Theory of Quantized Fields II}. Phys. Rev., {\bf 91} (3)  713--728 (1953); \emph{ibid.} \emph{The Theory of Quantized Fields III}. Phys. Rev., {\bf 91} (3) 728--740  (1953);  \emph{The Theory of Quantized Fields IV}.  Phys. Rev., {\bf 92} (5) 1283--1300 (1953); \emph{ibid.} \emph{The Theory of Quantized Fields V}. Phys. Rev., {\bf 93} (3) 615--624 (1954); \emph{ibid.} \emph{The Theory of Quantized Fields VI}.  Phys. Rev., {\bf 94} (5)  1362--1384 (1954).

\bibitem{So94} R.D. Sorkin.  \textit{Quantum mechanics as quantum measure theory}. Modern Physics Letters A, \textbf{9}(33), 3119--3127 (1994).

\bibitem{Ib18} F. M. Ciaglia, A. Ibort, G. Marmo. \textit{Schwinger's Picture of Quantum Mechanics I: Groupoids}.  Int. J. Geom. Meth. Modern Phys.,  (2109). arXiv:1905.12274 [math-ph]. DOI: 10.1142/S0219887819501196.  
%
\bibitem{Ib18b}  F. M. Ciaglia, A. Ibort, G. Marmo. \textit{Schwinger's Picture of Quantum Mechanics II: Algebras and observables}.   Int. J. Geom. Meth. Modern Phys.,  (2109). DOI: 10.1142/S0219887819501366.
%
\bibitem{Ib18c}  F. M. Ciaglia, A. Ibort, G. Marmo. \textit{Schwinger's Picture of Quantum Mechanics IV: Composition of systems}. In preparation (2019).

\bibitem{Gu09} S. Gudder. \textit{Quantum measure and integration theory}, J. Math. Phys. \textbf{50} (2009), 123509.

\bibitem{Si09} U. Sinha, C. Couteau, Z. Medendorp, I. S\"ollner, R. Laflamme, R. Sorkin, G. Weihs. \textit{Testing Born's rule in quantum mechanics with a triple slit experiment}. In AIP Conference Proceedings Vol. \textbf{1101} (1), 200--207, AIP (2009).

\bibitem{Ne17} R. Nesselrodt, E. Gagnon, A. Lytle, J. Moreno.  \textit{A New Metric For Triple-Slit Tests of Born's Rule}. In APS April Meeting Abstracts. (2017, January).\verb+ http://meetings.aps.org/link/BAPS.2017.APR.F1.18+

\bibitem{So95} R.D. Sorkin. \textit{Quantum measure theory and its interpretation}. arXiv preprint gr-qc/9507057 (1995). 


\bibitem{So16} A. M. Frauca, R. Sorkin. How to measure the quantum measure. \textit{International Journal of Theoretical Physics}, \textbf{56} (1) (2017) 232--258.
%
\bibitem{Ib18d} A. Ibort, M. A. Rodr\'{\i}guez.  On the structure of finite groupoids and their representations. \textit{Symmetry}, \textbf{11}, 414 (2019).
%
\bibitem{Re80} J.~Renault: \textit{\ A Groupoid Approach to }$C^{\star }-$%
\textit{Algebras}, Lect. Notes in Math. 793, Springer-Verlag (Berlin 1980).
%


\bibitem{Co19} F. Di Cosmo, A. Ibort, G. Marmo.  \textit{Groupoids and coherent states}. Open systems and information dynamics, \textbf{26}(4),  (2019).
%
\bibitem{Ge90} M. Gell-Mann, J.B. Hartle:  \textit{Quantum Mechanics in the Light of
Quantum Cosmology}, in Complexity, Entropy, and the Physics of Information, W.H. Zurek, ed., Reading, Addison-Wesley, 425--59 (1990); \textit{ibid.}, \textit{Classical Equations for Quantum Systems}, Physical Review D, \textbf{47}, 3345--82 (1993).

\bibitem{Ib14} A. Ibort, P. Linares, J.L.G. Llavona. \textit{On the multilinear Hausdorff problem of moments}. 
Revista Matem\'atica Complutense, {\bf 27} (1) 213--224 (2014). 

\bibitem{Do10} H. F. Dowker, S. Johnston, R. Sorkin, \textit{Hilbert Spaces from Path Integrals}, J. Phys. A, \textbf{43}, 275302 (2010), [arXiv:1002.0589].

\bibitem{Gu12} S. Gudder. \textit{Hilbert space representations of decoherence functionals and quantum measures}, Mathematica Slovaca. \textbf{62}, 1209--1230 (2012).

\bibitem{Ib19} A. Ibort, M.A. Rodr\'{\i}guez.  \textit{An introduction to the theory of groups, groupoids and their representations}, CRC (2019).

\bibitem{Do10b} F. Dowker, S. Johnston, S. Surya.  \textit{On extending the quantum measure}.  J. Phys. A: Math. Theor. \textbf{43} 505305 (20pp) (2010).

\bibitem{Ka51} R. V. Kadison, \textit{A representation theory for commutative topological algebra}, Mem. Amer. Math. Soc. No. \textbf{7} (1951) 39 pp.

\bibitem{Fa13} F. Falceto, L. Ferro, A. Ibort, G. Marmo.  \textit{Reduction of Lie-Jordan Banach algebras and quantum states}. J. Phys. A: Math. Theor., \textbf{48} 015201 (14pp) (2013).

\bibitem{AlfShu} E. M. Alfsen, F. W. Shultz. \textit{State spaces of operator algebras}. Springer (2001).

\bibitem{AlfShu2} E. M. Alfsen, F. W. Shultz. \textit{Geometry of state spaces of operator algebras}. Birkh\"{a}user (2003).

\bibitem{La98} N.P. Landsman.  \textit{Mathematical Topics between Classical and Quantum Mechanics}. Springer (1998).

\bibitem{Ga09} Y.  Ghazi-Tabatabai.  \textit{Quantum measure theory: A new interpretation}.  Ph.D. Thesis, Imperial College (2009).


\end{thebibliography}
\end{document}